\definecolor{darkblue}{rgb}{0.0, 0.0, 0.55}
\newtheorem{theorem}{Theorem}
\newtheorem*{theorem*}{Theorem}
\newtheorem{lemma}{Lemma}
\newtheorem{corollary}{Corollary}
\newtheorem{definition}{Definition}
\newtheorem{example}{Example}
\newtheorem*{example*}{Example}
\newtheorem{proposition}{Proposition}
\newcommand{\mycite}[1]{\citeauthor{#1}, \citeyear{#1}}
\title{Public Goods Provision in Directed Networks:

A Kernel Approach}
\author{Jingmin Huang\thanks{School of Economics, Renmin University of China, China. \textit{\ Email}: \href{mailto:jingmin.huang@ruc.edu.cn}{jingmin.huang@ruc.edu.cn}} \and Yang Sun\thanks{Department of Economics, Southwestern University of Finance and Economics, China. \textit{Email}: \href{mailto:sunyang789987@gmail.com}{sunyang789987@gmail.com}} \and Fanqi Xu\thanks{Department of Finance, Southwestern University of Finance and Economics, China.  \textit{Email}: \href{mailto:xufanqi1011@gmail.com}{xufanqi1011@gmail.com}} \and Wei Zhao\thanks{School of Economics and Management, Tsinghua University, China. \textit{Email}: \href{mailto:wei.zhao@outlook.fr}{wei.zhao@outlook.fr}}}
\date{Dec. 2025}
\begin{document}

\maketitle

\begin{abstract}

This paper investigates the decentralized provision of public goods in directed networks. We establish a correspondence between kernels in graph theory and specialized equilibria in which players either contribute a fixed threshold amount or free-ride entirely. Leveraging this relationship, we derive sufficient conditions for the existence and uniqueness of specialized equilibria in deterministic networks and prove that specialized equilibria exist almost surely in large random networks. We further demonstrate that enhancing network reciprocity weakly expands the set of specialized equilibria without destroying existing ones. Moreover, we propose an iterative elimination algorithm that simplifies the network while preserving equilibrium properties. Finally, we show that a Nash equilibrium is stable only if it is specialized, thereby providing dynamic justification for our focus on this equilibrium class.

\noindent \emph{Keywords:} {Public Goods Provision, Digraphs, Kernel, Reciprocity, Specialized Equilibrium}
\end{abstract}

\newpage

\tableofcontents

\newpage

\section{Introduction}

The non-excludability and non-rivalry properties of public goods lead to market failure. To mitigate such distortions \citep{lindahl_just_1958}, understanding the incentives for decentralized public goods provision is essential. When externalities are local and flow along geographic or social networks, the seminal paper by \cite{bramoulle2007public} analyzes how symmetric network structure shapes contribution incentives. However, externality flows are inherently asymmetric in many settings. In information and knowledge diffusion, content on social media propagates unilaterally from influencers to followers. In supply chains, technical innovations typically spill over from upstream manufacturers to downstream distributors but not vice versa. Similarly, physical flows such as river pollution abatement exhibit strict directionality: upstream efforts benefit downstream agents, while the reverse does not hold.\footnote{A vast literature analyzes directed information flows, including \cite{Acemoglu2010}, \cite{CalvoArmengol2015}, and \cite{bloch2018rumors}. Empirically, \cite{Mas2009} highlights the asymmetry of peer effects by showing that “worker effort is positively related to the productivity of workers who see him.”} This paper therefore investigates how asymmetric externality structures—represented by directed networks—determine incentives for public goods provision.

This paper extends the model of \cite{bramoulle2007public} to directed networks. A fixed directed network determines the pattern of spillovers among players. Each player values a public good that is costly to produce, and benefits are non-excludable along directed edges: if player $j$ contributes, all players with outgoing edges to $j$ receive the benefit. Players simultaneously decide whether to contribute, anticipating these directional spillover effects. We focus on specialized equilibria where the population partitions into specialists who provide the public good and free-riders who contribute nothing.

Our central result establishes that the sets of specialists in specialized equilibria correspond precisely to kernels of the underlying digraph—independent sets such that every outsider has an outgoing edge to the set. This characterization, building on the kernel concept introduced by \cite{von2007theory} for majority games and matching theory, provides a unified framework for analyzing equilibrium existence and multiplicity. Leveraging established results in kernel theory, we show the existence and uniqueness of specialized equilibrium in acyclic digraphs and demonstrate that existence holds almost surely in large random networks.

The kernel framework yields several comparative-static and welfare insights. First, we demonstrate that greater network reciprocity—converting one-way links into bidirectional ones—weakly expands the set of specialized equilibria without destroying existing ones. This monotonicity contrasts with \cite{bramoulle2007public}, where denser networks may reduce equilibrium welfare by altering contribution incentives. In directed networks, symmetrization preserves all existing specialized equilibria and thus improves equilibrium welfare unambiguously. Moreover, we develop an iterative elimination algorithm that systematically removes players whose strategic choices are locally determined—those who must contribute due to network position, those who can safely free-ride, and those without strategic influence. This procedure reduces equilibrium analysis to residual networks and yields sharp existence and uniqueness results, particularly for acyclic and sparse structures.

Finally, we characterize equilibrium stability under best-response dynamics. We prove that only specialized equilibria can be stable, providing justification for our focus on this class. We further show that a specialized equilibrium is stable if and only if it is stable in the game under the residual network obtained from the iterative elimination algorithm. Moreover, the equilibrium is stable if it corresponds to a sufficiently strong kernel in the sense that each free-rider has multiple contributing neighbors.

    This work contributes to the extensive literature on network games, which can be categorized by strategic interaction: strategic complementarities \citep{Ballester2006}, strategic substitutes \citep{bramoulle2007public}, and settings where both coexist \citep{bourl2017}. For comprehensive surveys on network games, see \citet{Jackson2017} and \citet{Elliott2019}.
   
    Our paper is particularly related to network games with strategic substitution. For undirected networks, \cite{bramoulle2007public} provide the foundational framework for analyzing strategic substitutes and establish that maximal independent sets characterize specialized equilibria. \cite{Bramoulle2014} and \cite{Allouch2017} derive sufficient conditions on network structure to guarantee uniqueness of Nash equilibrium under linear and nonlinear best-response functions, respectively. \cite{Galeotti2010} extend strategic network games to endogenous network formation problems. More recently, \cite{Allouch2019} introduce capacity constraints on individual public good provision within the \cite{bramoulle2007public} framework, while \cite{Allouch2021} analyze the welfare effects of taxation in networked public goods.

    For directed networks with strategic substitution, several papers have made important contributions. \cite{lopez2013public} extend \cite{bramoulle2007public} to directed unweighted networks, showing that maximal independent sets are sufficient to determine specialized equilibria in the directed network. \cite{Elliott2019a} characterize efficient solutions in terms of network eigenvalues in strategic substitute games, where the network capturing marginal externalities is naturally directed and weighted. \cite{bayer2023best} study directed network games by introducing best-response potential functions and analyzing convergence to Nash equilibria under one-sided improvement dynamics. \cite{papadimitriou2023public} focus on the computational complexity of determining equilibrium existence and computing Nash equilibria in public goods provision games on directed networks.

    This paper contributes directly to the \cite{bramoulle2007public} framework by providing a complete characterization of specialized equilibria in directed networks. We employ the kernel concept---a natural generalization of maximal independent sets to directed graphs---to characterize specialized equilibria. Building on kernel theory, we analyze existence, uniqueness, multiplicity, and stability properties of specialized equilibria, deriving results that parallel and extend those of \cite{bramoulle2007public} to the directed setting.
   
    The rest of this paper is organized as follows. Section 2 develops the model framework. Section 3 characterizes specialized equilibria using kernels and analyzes their existence and multiplicity. Section 4 studies the stability of specialized equilibria under best-response dynamics. Section 5 concludes.

\section{Model}
Consider a set $N=\{1,2,\ldots,n\}$ of players embedded in a directed network (digraph) $\mathbf{G}=(g_{ij})_{i,j\in N}$. The element $g_{ij}\in \{0,1\}$ represents the relationship between players $i$ and $j$, where $g_{ij}=1$ indicates a directed edge from $i$ to $j$ and $g_{ij}=0$ otherwise. We assume no self-loops, i.e., $g_{ii}=0$ for all $i\in N$. 

Each player $i$ chooses a level of public good provision $e_{i}\geq 0$ at marginal cost $c>0$ and benefits from the contributions of their out-neighbors $N_{i}=\{j:g_{ij}=1\}$—those players to whom $i$ has directed edges. Given a strategy profile $\mathbf{e}=(e_{1},\ldots,e_{n})$, player $i$'s payoff is
\begin{equation}
U_{i}(\mathbf{e},\mathbf{G})=b\left(e_{i}+\sum_{j\in N_{i}}e_{j}\right)-ce_{i},
\end{equation}
where the benefit function $b(\cdot)$ satisfies $b(0)=0$, $b'(0)>c$, $b'>0$, and $b''<0$. We denote this network game by $\Gamma(\mathbf{G})$.\footnote{The setup is identical to that of \cite{bramoulle2007public} except that we allow directed networks, where $g_{ij}\neq g_{ji}$ is possible.} The existence of a Nash equilibrium is guaranteed by the compactness of strategy spaces and the continuity of payoff functions. Moreover, \cite{bramoulle2007public} show that when $\mathbf{G}$ is undirected, multiple types of Nash equilibria exist, among which specialized equilibria are of particular interest.

Let $e^{\ast}$ denote the threshold level of public good provision at which marginal benefit equals marginal cost, i.e., $b'(e^{\ast})=c$. We then have the following definition:

\begin{definition}[\mycite{bramoulle2007public}]
A strategy profile $\mathbf{e}=(e_{1},\ldots,e_{n})$ is a specialized equilibrium of $\Gamma(\mathbf{G})$ if and only if for each player $i$, 
\begin{equation}
e_{i}=
\begin{cases}
0, & \text{if } \sum_{j\in N}g_{ij}e_{j}\geq e^{\ast}, \\
e^{\ast}, & \text{otherwise}.
\end{cases}
\label{D1}
\end{equation}
\end{definition}

A specialized equilibrium is characterized by a binary contribution pattern: each player either provides the threshold amount $e^{\ast}$ (specialists) or contributes nothing (free-riders). There are no intermediate contribution levels between $0$ and $e^{\ast}$. When the network is undirected, specialized equilibria are the only equilibria stable under small perturbations \citep{bramoulle2007public} and emerge more naturally than other equilibrium types in experimental settings \citep{rosenkranz2012network}.

A maximal independent set $I\subseteq N$ is a set of mutually unconnected players that is not a proper subset of any other set of unconnected players. When $\mathbf{G}$ is undirected, \citet{bramoulle2007public} establish that each specialized equilibrium corresponds to a maximal independent set of the network. Consequently, specialized equilibria always exist in undirected networks since every such network admits at least one maximal independent set. However, for directed networks, specialized equilibria may fail to exist, as the following example illustrates.

\begin{example}\label{E1}
Consider a directed three-player cycle depicted in Figure \ref{F1}, where player $j$ benefits from $i$'s provision, $k$ benefits from $j$'s provision, and $i$ benefits from $k$'s provision. No specialized equilibrium exists because any player's contribution generates a deviation cycle. Specifically, if $i$ contributes, then $j$ free-rides (benefiting from $i$'s provision), which induces $k$ to contribute. Player $k$'s contribution then creates an incentive for $i$ to deviate to free-riding.\footnote{Although no specialized equilibrium exists in this game, a Nash equilibrium does exist since $\Gamma(\mathbf{G})$ has compact strategy spaces and continuous payoffs. In particular, the profile where each player provides $\frac{e^{\ast}}{2}$ constitutes the unique (non-specialized) Nash equilibrium.}
\end{example}

\begin{figure}[h]
    \centering
    \includegraphics[width=0.23\linewidth]{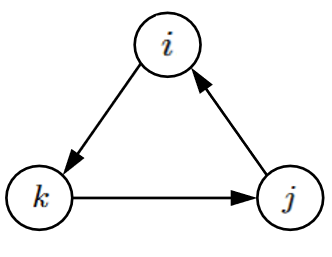}
    \caption{A directed cycle involving three players}
    \label{F1}
\end{figure}

\section{Characterization of Specialized Equilibria}

In this section, we characterize specialized equilibria in directed networks and analyze their existence.

\subsection{Specialized Equilibria and Kernels}
\begin{definition}[Kernel]\label{D2}
A set $K$ of nodes in a digraph $\textbf{G}$ is a kernel if and only if
\begin{enumerate}
    \item[(i)] $K$ is an independent set;
    \item[(ii)] for every $i\in N\setminus K$, there exists $j\in K$ such that $g_{ij}=1$.
\end{enumerate}
\end{definition}

A kernel requires that each node outside $K$ has at least one outgoing edge to a node within $K$. This concept naturally extends maximal independent sets from undirected to directed networks. In an undirected network, maximal independent sets and kernels coincide: a set is maximal independent if and only if it is independent and every outside node is adjacent to at least one inside node. The kernel generalizes this by imposing directionality: adjacency is replaced by the requirement that connections point \emph{toward} the kernel. Originally introduced by \cite{von2007theory} as a solution concept for majority games, kernels have since found applications in matching theory, logic, and other fields. The following theorem establishes the precise connection between kernels and specialized equilibria.

\begin{theorem}[Kernel and Specialized Equilibrium]\label{T1}
    A specialized profile is a Nash equilibrium if and only if its set of specialists forms a kernel of the directed network $\textbf{G}$. 
\end{theorem}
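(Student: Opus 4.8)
The plan is to prove the two directions of the equivalence by directly translating the defining conditions of a specialized equilibrium in \eqref{D1} into the two kernel axioms of Definition~\ref{D2}. Fix a specialized profile $\mathbf{e}$ and let $S=\{i\in N:e_i=e^\ast\}$ be its set of specialists, so that $e_i=0$ for $i\notin S$. The key observation I would use repeatedly is that, because $e_j\in\{0,e^\ast\}$ for every $j$, the quantity $\sum_{j\in N}g_{ij}e_j$ equals $e^\ast$ times the number of out-neighbors of $i$ lying in $S$; hence $\sum_{j\in N}g_{ij}e_j\geq e^\ast$ if and only if player $i$ has at least one out-neighbor in $S$, i.e. $N_i\cap S\neq\varnothing$.

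For the ``only if'' direction, assume $\mathbf{e}$ is a Nash equilibrium (equivalently, satisfies \eqref{D1}, since best responses in $\Gamma(\mathbf{G})$ are exactly given by that cutoff rule by strict concavity of $b$). First I would check independence of $S$: take $i\in S$, so $e_i=e^\ast\neq 0$; by \eqref{D1} this forces $\sum_{j\in N}g_{ij}e_j<e^\ast$, so by the observation above $i$ has no out-neighbor in $S$, meaning there is no $j\in S$ with $g_{ij}=1$. Since this holds for every $i\in S$, the set $S$ contains no internal edges, so it is independent (axiom (i)). Next I would check the domination condition: take $i\notin S$, so $e_i=0$; by \eqref{D1} this requires $\sum_{j\in N}g_{ij}e_j\geq e^\ast$, which by the observation means $i$ has an out-neighbor in $S$, i.e. there exists $j\in S$ with $g_{ij}=1$ (axiom (ii)). Thus $S$ is a kernel.

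For the ``if'' direction, suppose $S$ is a kernel and $\mathbf{e}$ is the specialized profile with specialist set $S$; I must verify \eqref{D1} for every player. If $i\in S$, then $e_i=e^\ast$, and independence of $S$ gives that $i$ has no out-neighbor in $S$, so $\sum_{j\in N}g_{ij}e_j=0<e^\ast$, consistent with the ``otherwise'' branch of \eqref{D1}. If $i\notin S$, then $e_i=0$, and the domination property provides some $j\in S$ with $g_{ij}=1$, whence $\sum_{j\in N}g_{ij}e_j\geq e^\ast$, consistent with the first branch of \eqref{D1}. Since \eqref{D1} characterizes best responses, $\mathbf{e}$ is a Nash equilibrium.

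The argument is essentially a bookkeeping translation, so there is no deep obstacle; the one point that deserves care is the equivalence between ``$\mathbf{e}$ is a Nash equilibrium'' and ``$\mathbf{e}$ satisfies \eqref{D1}'' for specialized profiles. I would make explicit that, given others' play, player $i$'s payoff $b(e_i+\sum_{j\in N_i}e_j)-ce_i$ is strictly concave in $e_i$ with derivative $b'(e_i+\sum_{j\in N_i}e_j)-c$, so the unique best response is $e_i=\max\{0,\,e^\ast-\sum_{j\in N_i}e_j\}$; when all contributions are in $\{0,e^\ast\}$ this collapses to exactly the cutoff in \eqref{D1} (either $0$ if some out-neighbor contributes $e^\ast$, or $e^\ast$ if none does), so no intermediate value can arise and the specialized-equilibrium conditions are precisely the equilibrium conditions on the restricted strategy set. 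With that lemma in hand, both implications follow from the counting observation above.
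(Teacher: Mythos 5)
Your proposal is correct and follows essentially the same route as the paper's proof: translating the independence and domination conditions of a kernel into the best-response conditions for specialists and free-riders, respectively. The only difference is that you spell out the best-response characterization $e_i=\max\{0,e^\ast-\sum_{j\in N_i}e_j\}$ explicitly, which the paper leaves implicit.
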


The economic intuition behind Theorem \ref{T1} is straightforward. The independence property (first point in Definition \ref{D2}) ensures that specialists within the kernel receive no contributions from other specialists and thus must provide the threshold level $e^{\ast}$ themselves. The domination property (second point in Definition \ref{D2}) guarantees that each non-specialist outside the kernel benefits from the contribution of at least one specialist in-neighbor, receiving sufficient externalities to optimally free-ride by contributing nothing. Unfortunately, \citet{chvatal1973computational} established that the kernel existence problem—determining whether a given digraph admits a kernel—is NP-complete. More recently, \citet{papadimitriou2023public} reiterated this result using a different proof technique in their Theorem 1.\footnote{Their analysis concerns pure-strategy equilibria in the indivisible-good setting with the max utility function. Although their model differs in formulation, players' best responses are identical to ours, so their result applies directly to specialized equilibria in our framework.}

\begin{corollary}[\mycite{papadimitriou2023public}]
    It is NP-complete to decide whether a public good game admits a specialized equilibrium.
\end{corollary}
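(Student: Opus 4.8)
The plan is to prove membership in NP and NP-hardness separately, with Theorem~\ref{T1} serving as the bridge to the classical kernel existence problem.

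\emph{Membership in NP.} A specialized profile of $\Gamma(\mathbf{G})$ is completely determined by its set of specialists $S\subseteq N$, an object of size at most $n$ that can be written down and checked in polynomial time. Given $S$, one verifies directly that $S$ contains no edge of $\mathbf{G}$ (independence) and that every $i\in N\setminus S$ has an outgoing edge into $S$ (domination); by Definition~\ref{D2} this is precisely the statement that $S$ is a kernel, and by Theorem~\ref{T1} it certifies that the induced specialized profile is a Nash equilibrium. Hence the decision problem lies in NP, with $S$ as the witness.

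\emph{NP-hardness.} I would reduce from the kernel existence problem, which \citet{chvatal1973computational} proved NP-complete. Given an arbitrary digraph $\mathbf{G}$, output the public good game $\Gamma(\mathbf{G})$ played on that same digraph, fixing once and for all some benefit function $b$ and cost $c>0$ satisfying the maintained assumptions $b(0)=0$, $b'(0)>c$, $b'>0$, $b''<0$; the threshold $e^{\ast}$ with $b'(e^{\ast})=c$ is then a fixed constant and attaches only a constant-size description to the instance. This map is computable in time linear in the size of $\mathbf{G}$. By Theorem~\ref{T1}, $\Gamma(\mathbf{G})$ admits a specialized equilibrium if and only if $\mathbf{G}$ has a kernel, so the reduction is correct and deciding existence of a specialized equilibrium is NP-hard.

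Combining the two parts yields NP-completeness. There is essentially no difficult step: all the substantive content is carried by Theorem~\ref{T1} (the kernel/equilibrium correspondence) and by Chvátal's theorem. The only points requiring minor care are (i) pinning down a concrete, finitely encodable benefit function and cost so that the constructed game is genuinely a legitimate input of the target problem, and (ii) confirming that the witness $S$ and its verification are of polynomial size and polynomial time — both immediate. Alternatively, one could bypass the reduction and simply invoke Theorem~1 of \citet{papadimitriou2023public} via the best-response equivalence noted in the text, but the route through kernels is the most direct given Theorem~\ref{T1}.
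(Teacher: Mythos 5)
Your proposal is correct and follows essentially the same route the paper intends: the corollary is justified by combining Theorem~\ref{T1} (specialists form a kernel) with Chv\'atal's NP-completeness of kernel existence, exactly as in your reduction, with the kernel/specialist set serving as the polynomial-size NP witness. The paper gives no separate proof beyond this observation, so your write-up simply makes the same argument explicit.
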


There exists a relatively systematic body of work on the existence and uniqueness of kernels in directed networks. Below we present some results that provide conditions for existence and uniqueness.

\begin{proposition}[\mycite{richardson1946weakly}]\label{P1}
    A specialized equilibrium always exists in digraphs that contain no odd directed cycles.
\end{proposition}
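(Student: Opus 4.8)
The plan is to invoke Theorem~\ref{T1} to reduce the statement to a purely graph-theoretic claim: every finite digraph with no odd directed cycle admits a kernel (Richardson's theorem). I would prove this by induction on $|N|$, isolating the strongly connected case as the key lemma.

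The key lemma asserts that a strongly connected digraph $\mathbf{G}$ with no odd directed cycle has a kernel, and this is where the real work lies. Fix a vertex $r$ and, for each $v$, let $\pi(v)\in\{0,1\}$ be the parity of the length of some directed path from $r$ to $v$ (one exists by strong connectivity). First I would check that $\pi$ is well defined: if two $r\to v$ directed paths had opposite parities, concatenating each with a directed $v\to r$ path would produce two closed directed walks through $r$ of opposite parities, hence one of odd length; since any closed directed walk decomposes into directed cycles whose lengths sum to the length of the walk, an odd closed walk must contain an odd directed cycle, a contradiction. Then any arc $u\to w$ forces $\pi(w)=1-\pi(u)$ (extend an $r\to u$ path by the arc), so $B:=\pi^{-1}(0)$ is an independent set; and every $w\notin B$ has out-degree at least one with all of its out-neighbours in $B$, so $w$ has an arc into $B$. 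Hence $B$ is a kernel.

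For the inductive step, if $\mathbf{G}$ is strongly connected the lemma already finishes the job; otherwise I would choose a strongly connected component $S$ that is a sink of the condensation (so no arc leaves $S$), obtain a kernel $K_S$ of $\mathbf{G}[S]$ from the lemma, set $A$ to be the set of vertices outside $S$ having an arc into $K_S$, and apply the induction hypothesis to the smaller odd-cycle-free digraph $\mathbf{G}'=\mathbf{G}-(S\cup A)$ to get a kernel $K'$. Then $K=K_S\cup K'$ should be a kernel of $\mathbf{G}$: no arc lies inside $K_S$ or inside $K'$, an arc from $K_S$ to $K'$ would leave $S$, and an arc from $K'$ to $K_S$ would place a vertex of $K'$ in $A$ (contradicting $K'\subseteq V(\mathbf{G}')$), so $K$ is independent; and every vertex outside $K$ — whether it lies in $S\setminus K_S$, in $A$, or in $V(\mathbf{G}')\setminus K'$ — has an arc into $K_S$ or into $K'$ by construction, so $K$ dominates. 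Theorem~\ref{T1} then delivers the proposition.

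I expect the main obstacle to be the well-definedness of the parity colouring in the lemma, and in particular the sub-claim that an odd-length closed directed walk must contain an odd directed cycle. A secondary point requiring care is the gluing step: one must verify that deleting exactly $S\cup A$ leaves no vertex uncovered and introduces no forbidden arc when $K_S$ and $K'$ are combined, and that choosing $S$ to be a \emph{sink} component is precisely what prevents arcs from escaping $K_S$.
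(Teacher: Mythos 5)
Your argument is correct, but it is worth noting that the paper does not prove this statement at all: Proposition~\ref{P1} is imported as a black box from \cite{richardson1946weakly} (via the kernel correspondence of Theorem~\ref{T1}), so what you have written is a self-contained proof of a result the paper only cites. Your proof is the classical one for Richardson's theorem, and both of the points you flag as delicate are handled correctly. The parity colouring is sound: an odd closed directed walk does decompose into directed cycles whose lengths sum to the walk's length (repeatedly split at a repeated vertex; under the paper's no-self-loop assumption no length-one piece can arise), so one of those cycles must be odd. The only phrasing to tighten is that you define $\pi(v)$ via \emph{paths} but then use the parity of a \emph{walk} (an $r\to u$ path extended by the arc $u\to w$ need not be a path); your concatenation argument in fact shows that all directed $r\to v$ walks share one parity, which is the statement you actually need, so this is cosmetic. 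The gluing step is also right: taking $S$ to be a sink component of the condensation is exactly what rules out arcs from $K_S$ into $K'$, removing $A$ rules out arcs from $K'$ into $K_S$, and every deleted vertex (in $S\setminus K_S$ or in $A$) is dominated by $K_S$ while every surviving non-kernel vertex is dominated by $K'$; the induction terminates since $S\neq\emptyset$. Relative to the paper, your version buys a fully self-contained existence proof at the cost of about a page; the paper's choice to cite instead is defensible since the result is standard, but nothing in your argument is wrong.
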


Proposition \ref{P1} establishes a sufficient condition for the existence of a kernel in a digraph, and hence for the existence of a specialized equilibrium in the public goods game. Intuitively, the absence of odd directed cycles eliminates free-riding deviation cycles such as the one illustrated in Example \ref{E1}. An immediate corollary is that specialized equilibria exist in acyclic digraphs, including directed trees, transitive digraphs, and bipartite digraphs with exclusively directed links. 

Additional sufficient conditions, though more complex, are summarized in Theorem 2.3 of \cite{boros2006perfect}. Most of these relaxed conditions embody the intuition that when odd directed cycles contain sufficient additional connections, free-riding deviation cycles can be avoided, guaranteeing the existence of specialized equilibria. For instance, the presence of just two reversible arcs in each odd directed cycle suffices to ensure the existence of a kernel \citep{duchet1980graphes}. In Example \ref{E1}, adding a single reversible connection such as $g_{ij}=1$ yields a specialized equilibrium with $e_j=e^*$ and $e_i=e_k=0$. To date, no tractable necessary and sufficient condition for kernel existence has been established in the literature.\footnote{\cite{berge1973graphs} provided a necessary and sufficient condition for kernels in Proposition 1 (page 309), which characterizes kernels in terms of their characteristic functions, but this is merely a reformulation. Proposition 18 and Theorem 19 in the section ``Domination in Digraphs" of \cite{haynes2021structures} establish the equivalence between kernel-perfect digraphs and the existence of Grundy functions. However, characterizing the latter also lacks transparent necessary and sufficient conditions.} Using different proof techniques, several other sufficient conditions have been established. For example, Theorem 12.2 in \cite{fleiner2000stable} (or Theorem 2.4 of \cite{boros2006perfect}) provides conditions from the perspective of unions of partial orders, while Theorem 1 in \cite{boros1996perfect} approaches the problem through perfect graph theory and graph coloring.

Although the potential non-existence of kernels is somewhat discouraging, an interesting result indicates that kernels are highly likely to exist in random graphs when the number of players is sufficiently large. Let $\mathbf{G}(n,p)$ denote the random digraph with $n$ nodes where each directed edge is present independently with probability $p$.

\begin{proposition}[\mycite{de1990kernels}]\label{P2}
    For any fixed $p\in[0,1]$, the probability that a specialized equilibrium exists in the random directed graph $\mathbf{G}(n,p)$ tends to $1$ as $n\to \infty$.
\end{proposition}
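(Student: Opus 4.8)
The plan is to translate the statement into graph theory and then argue by a probabilistic construction. By Theorem~\ref{T1}, a specialized equilibrium of $\Gamma(\mathbf{G})$ exists if and only if $\mathbf{G}$ admits a kernel, so it suffices to prove that $\Pr[\mathbf{G}(n,p)\text{ has a kernel}]\to 1$ as $n\to\infty$. The boundary cases are immediate: if $p=0$ the digraph has no arcs and $N$ itself is a kernel, while if $p=1$ the digraph is complete and any singleton is a kernel. Assume henceforth $p\in(0,1)$ and set $q=1-p$. Note that one cannot simply invoke Proposition~\ref{P1}: for fixed $p$ the expected number of directed triangles in $\mathbf{G}(n,p)$ is of order $n^{3}$, so with high probability the digraph contains odd directed cycles and the no-odd-cycle hypothesis fails.

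The substance is a high-probability construction of an independent set $K$ that absorbs every vertex outside it, carried out by revealing the randomness in stages. In a first stage, examining only the arcs inside a designated sub-population $N_{1}$, I would exhibit an independent set $S\subseteq N_{1}$ whose size grows proportionally to $\log_{1/q}n$ — either by a greedy peeling argument (repeatedly pick a vertex and delete it together with all of its in- and out-neighbours) or by a second-moment estimate on the number of independent sets of the target size. In a second stage I would use the still-unexposed arcs from $N\setminus S$ into $S$: for a fixed vertex $u\notin S$ the probability of having no arc into $S$ is $q^{|S|}$, so a union bound leaves only a set $U$ of orphan vertices that is small with high probability. A final stage repairs $U$: for each $u\in U$ one adjoins to the kernel an out-neighbour $v(u)$ of $u$, taken from a pool of vertices deliberately kept non-adjacent to $S$, and checks that the $v(u)$ can be chosen mutually non-adjacent and non-adjacent to $S$. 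Concentration bounds would control $|U|$ and the size of this reserve pool simultaneously so that the repair succeeds almost surely.

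The main obstacle is the tightness between the size an absorbing independent set must have and the independence number of $\mathbf{G}(n,p)$: for the union bound in the second stage to eliminate all orphans one wants $|S|$ comfortably above $\log_{1/q}n$, yet that is already the order of the largest independent set, leaving essentially no slack. Hence a one-line argument of the form ``a sufficiently large independent set is automatically a kernel'' does not close, and neither does a direct first- or second-moment computation over kernels of a prescribed size. Getting the bookkeeping right — fixing the sub-population sizes, specifying exactly which arcs are exposed at each stage so that the stages remain independent, and keeping a clean reserve large enough to patch every orphan — is the delicate part, and is where I expect the real content of the de~la~Vega argument to lie.
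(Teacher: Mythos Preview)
The paper does not supply a proof of Proposition~\ref{P2}: it is stated as a citation of de~la~Vega (1990), and the appendix contains no argument for it. So there is no in-paper proof to compare against; your proposal already goes further than the paper does by attempting to reconstruct the underlying argument.

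On the substance of your sketch: the reduction via Theorem~\ref{T1} and the boundary cases $p\in\{0,1\}$ are correct, and you are right that Proposition~\ref{P1} is useless here since odd directed cycles abound in $\mathbf{G}(n,p)$. The staged-revelation plan (build an independent seed, check absorption, repair orphans from a reserve) is the right genre of argument and is indeed how results of this type are typically proved. However, as you yourself flag, the proposal is not a proof but an outline with the hard step left open: you need $|S|$ to exceed $\log_{1/q}n$ for the union bound on orphans, yet the independence number of $\mathbf{G}(n,p)$ is only $\Theta(\log_{1/q}n)$, so there is no slack, and the repair stage must be specified precisely (how large the reserve is, why it stays independent of $S$, why the $v(u)$ can be chosen mutually non-adjacent) before anything is established. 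Until those parameters are fixed and the concentration estimates are written down, what you have is a plausible strategy rather than a proof; the paper sidesteps this entirely by deferring to the original reference.
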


Furthermore, the number $k(\mathbf{G})$ of specialized equilibria in digraph $\mathbf{G}$ is characterized by the following proposition.

\begin{proposition}[\mycite{boros2006perfect}]\label{P3}~~
    \begin{enumerate}
        \item[(i)] if all directed cycles in $\textbf{G}$ are odd, then $k(\textbf{G}) \leq 1$;
        \item[(ii)] if all directed cycles in $\textbf{G}$ are even, then $k(\textbf{G}) \geq 1$;
        \item[(iii)] if $\textbf{G}$ is acyclic, then $k(\textbf{G}) = 1$.
    \end{enumerate}
\end{proposition}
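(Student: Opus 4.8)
The plan is to prove part (i) from scratch by a symmetric-difference argument, and then to read off parts (ii) and (iii) with almost no extra work, using Proposition~\ref{P1} (and, implicitly, Theorem~\ref{T1}, which lets us phrase everything in terms of kernels).

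For part (i), I would argue by contradiction: suppose $\mathbf{G}$ has two distinct kernels $K_1$ and $K_2$, and focus on the symmetric difference $D=(K_1\setminus K_2)\cup(K_2\setminus K_1)$, which is nonempty, together with its two ``sides'' $A=K_1\setminus K_2$ and $B=K_2\setminus K_1$. The crucial structural fact is that every node of $D$ has an outgoing edge to the \emph{opposite} side: if $v\in A$ then $v\notin K_2$, so by the kernel property of $K_2$ there is $w\in K_2$ with $g_{vw}=1$, and independence of $K_1$ rules out $w\in K_1$, forcing $w\in B$; symmetrically every node of $B$ has an out-edge into $A$. Iterating, I build an infinite directed walk $v_0\to v_1\to v_2\to\cdots$ inside $D$ in which every edge runs between $A$ and $B$. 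Since $N$ is finite, some vertex repeats; taking a repetition $v_i=v_j$, $i<j$, with $j-i$ minimal makes $v_i,v_{i+1},\ldots,v_{j-1}$ distinct, so $v_i\to v_{i+1}\to\cdots\to v_j$ is a genuine directed cycle of $\mathbf{G}$. As we traverse this cycle its vertices alternate between $A$ and $B$, so it has even length, contradicting the hypothesis that every directed cycle of $\mathbf{G}$ is odd. Hence $K_1=K_2$ and $k(\mathbf{G})\le 1$.

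Part (ii) is then immediate: if all directed cycles of $\mathbf{G}$ are even then $\mathbf{G}$ has no odd directed cycle, so Proposition~\ref{P1} supplies a kernel and $k(\mathbf{G})\ge 1$. Part (iii) follows by applying (i) and (ii) to the acyclic case: if $\mathbf{G}$ has no directed cycle at all, then both hypotheses ``all directed cycles are odd'' and ``all directed cycles are even'' hold vacuously, so (i) gives $k(\mathbf{G})\le 1$ and (ii) gives $k(\mathbf{G})\ge 1$, whence $k(\mathbf{G})=1$. (Alternatively, in the acyclic case one can exhibit the unique kernel constructively: list the nodes in a reverse topological order and add a node to $K$ exactly when none of its out-neighbors has already been added.) I do not anticipate a genuine obstacle; the one point requiring care is the bookkeeping in part (i)---verifying that the walk always crosses to the opposite side, which is exactly where independence of $K_1$ and of $K_2$ enters and is what guarantees the extracted cycle is genuinely of even length.
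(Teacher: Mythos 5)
Your argument is correct, but note that the paper does not actually prove this proposition: it is imported verbatim from \cite{boros2006perfect} (with part (ii) being Richardson's theorem, Proposition~\ref{P1}, which the paper likewise only cites). So there is no in-paper proof to match; what you have written is a self-contained justification of a cited result. Your part (i) is the classical symmetric-difference argument: the observation that every vertex of $K_1\setminus K_2$ must send an edge into $K_2\setminus K_1$ (kernel domination by $K_2$ plus independence of $K_1$) and vice versa is exactly right, and extracting a minimal repetition from the resulting alternating walk does yield a genuine directed cycle of even length (length $j-i\ge 2$ is even because $v_i=v_j$ lie on the same side of the alternation; a $2$-cycle counts as an even directed cycle in the paper's convention, cf.\ Example~\ref{E2}). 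Parts (ii) and (iii) then follow as you say, with (ii) resting entirely on Proposition~\ref{P1} --- so your proof is elementary modulo Richardson's theorem, which is a fair dependency given that the paper states it separately as a black box. Your constructive remark for (iii) (greedy selection in reverse topological order) matches the logic of the paper's elimination algorithm (Algorithm~\ref{alg:node_elimination} and Proposition~\ref{P5}), which is the route the paper itself uses to reconfirm existence and uniqueness in the acyclic case. No gaps.
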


According to part (iii) of Proposition \ref{P3}, a unique specialized equilibrium exists when the network is an acyclic digraph. Note that in an acyclic digraph, there is no undirected links, as any bidirectional connection between nodes $i$ and $j$ (where $g_{ij}=g_{ji}=1$) would create a cycle $(i,j,i)$. In contrast, since any undirected graph admits a maximal independent set containing any given node, multiple specialized equilibria exist when the network is undirected.

\begin{example}\label{E2} 
Consider the simplest non-trivial undirected graph of two players with one connection, represented as a digraph, shown on the left of Figure \ref{F2}. It has a unique directed cycle $\{1,2,1\}$ which is even. There are two specialized equilibria $\mathbf{e}=(e^*,0)$ and $\mathbf{e}=(0,e^*)$. The set of contributors, $\{1\}$ or $\{2\}$, serves as a kernel in this graph. In the middle digraph of Figure \ref{F2}, no directed cycle exists and $\mathbf{e}=(e_1,e_2,e_3)=(0,0,e^*)$ is the only specialized equilibrium, corresponding to the unique kernel $\{3\}$. The right digraph of Figure \ref{F2} is the same as we have shown in Example \ref{E1}. It contains only one directed cycle, $\{1,3,2,1\}$, which has odd length. Consequently, no kernel exists, and no specialized equilibrium exists.
\end{example}

\begin{figure}[h]
    \centering
    \includegraphics[width=0.8\linewidth]{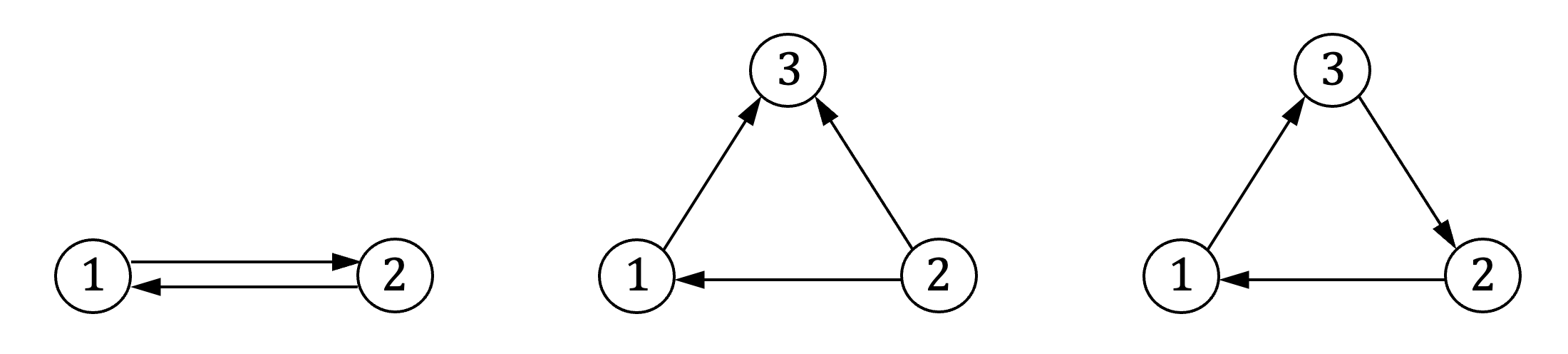}
    \caption{Cycles and Kernels}
    \label{F2}
\end{figure}

\subsection{Network Reciprocity}
To understand how the asymmetry of directed networks affects equilibrium outcomes, we introduce concepts that formalize the process of making directed edges bidirectional. For two networks $\mathbf{G}$ and $\mathbf{\hat{G}}$, we write $\mathbf{G}\subseteq \mathbf{\hat{G}}$ if $g_{ij}\leq \hat{g}_{ij}$ for all $i,j \in N$. 

\begin{definition}[Symmetrization and Partial Symmetrization]
The symmetrization of a digraph $\mathbf{G}$ is the undirected network $s(\mathbf{G})=\bar{\mathbf{G}}$, where $\bar{g}_{ij}=\max\{g_{ij},g_{ji}\}$ for each $i,j\in N$. A digraph $\mathbf{\hat{G}}$ is a partial symmetrization of $\mathbf{G}$, denoted $\mathbf{\hat{G}} \in ps(\mathbf{G})$, if $\mathbf{G}\subseteq \hat{\mathbf{G}}\subseteq s(\mathbf{G})$.
\end{definition}

The symmetrization $s(\mathbf{G})$ is the unique undirected graph obtained by making \textit{all} one-way edges bidirectional, while the set $ps(\mathbf{G})$ consists of all digraphs obtained by making \textit{some} one-way edges bidirectional. For two networks $\mathbf{G}$ and $\mathbf{G}^{\prime}$ with the same symmetrization $s(\mathbf{G})=s(\mathbf{G}^{\prime})$, we say they share \textit{the same architecture}—they differ only in edge directionality. All graphs in $ps(\mathbf{G})$ share the same architecture. For two networks $\mathbf{G}$ and $\mathbf{\hat{G}}$ with the same architecture, we say $\mathbf{G}$ is \textit{less reciprocal} than $\mathbf{\hat{G}}$ if $\mathbf{G}\subseteq \mathbf{\hat{G}}$. This partial order captures the intuition that $\mathbf{\hat{G}}$ contains all directed edges in $\mathbf{G}$ while converting some one-way edges to bidirectional connections, thereby becoming less asymmetric.

The following result establishes that specialized equilibria exhibit a striking monotonicity property: they persist as networks become more reciprocal.

\begin{proposition}[Equilibrium Monotonicity in Network Reciprocity]\label{P4}
For any digraph $\mathbf{G}$, if $\mathbf{e}$ is a specialized equilibrium of $\Gamma(\mathbf{G})$, then $\mathbf{e}$ is also a specialized equilibrium of $\Gamma(\hat{\mathbf{G}})$ for any $\hat{\mathbf{G}}\in ps(\mathbf{G})$.
\end{proposition}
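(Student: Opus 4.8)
The plan is to use Theorem~\ref{T1} to translate the statement about specialized equilibria into a purely graph-theoretic statement about kernels, and then verify that property directly. By Theorem~\ref{T1}, if $\mathbf{e}$ is a specialized equilibrium of $\Gamma(\mathbf{G})$, then its set of specialists $K$ is a kernel of $\mathbf{G}$. It then suffices to show that the \emph{same} set $K$ is a kernel of $\hat{\mathbf{G}}$ for every $\hat{\mathbf{G}}\in ps(\mathbf{G})$; applying Theorem~\ref{T1} in the other direction to $\hat{\mathbf{G}}$ then yields that the specialized profile with specialist set $K$—which is exactly $\mathbf{e}$—is a Nash equilibrium of $\Gamma(\hat{\mathbf{G}})$.

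So I would check the two defining conditions of a kernel (Definition~\ref{D2}) for $K$ in $\hat{\mathbf{G}}$. First, independence: I must show $K$ remains an independent set in $\hat{\mathbf{G}}$. Since $\hat{\mathbf{G}}\subseteq s(\mathbf{G})$, any edge of $\hat{\mathbf{G}}$ between two nodes of $K$ would also be an edge of $s(\mathbf{G})=\bar{\mathbf{G}}$, i.e.\ there would exist $i,j\in K$ with $\bar g_{ij}=1$, hence $g_{ij}=1$ or $g_{ji}=1$; either way $K$ would not be independent in $\mathbf{G}$, a contradiction. Second, domination: I must show every $i\in N\setminus K$ has an outgoing edge in $\hat{\mathbf{G}}$ to some node of $K$. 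Because $K$ is a kernel of $\mathbf{G}$, there exists $j\in K$ with $g_{ij}=1$; since $\mathbf{G}\subseteq\hat{\mathbf{G}}$, we get $\hat g_{ij}=1$, so the domination edge survives. Hence $K$ is a kernel of $\hat{\mathbf{G}}$.

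The two inclusions in the definition of partial symmetrization do exactly the work here: $\mathbf{G}\subseteq\hat{\mathbf{G}}$ preserves the domination (outgoing) edges, while $\hat{\mathbf{G}}\subseteq s(\mathbf{G})$ prevents any new edge inside $K$ from appearing, preserving independence. There is no real obstacle; the only point that deserves a sentence of care is that adding edges is, a priori, a threat to independence and a help to domination, so one needs the upper bound $\hat{\mathbf{G}}\subseteq s(\mathbf{G})$ to control the former—and this is precisely why the result is stated for partial symmetrizations rather than for arbitrary supergraphs. I would close by noting the immediate corollary that $k(\mathbf{G})\le k(\hat{\mathbf{G}})$, and in particular $k(\mathbf{G})\le k(s(\mathbf{G}))$, so symmetrization weakly expands the set of specialized equilibria.
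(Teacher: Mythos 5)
Your proof is correct and is in substance the same as the paper's: the paper verifies the two equilibrium conditions directly (independence of specialists is preserved because $\hat{g}_{ij}\le\bar{g}_{ij}=0$, and free-riders' externalities only increase since $\mathbf{G}\subseteq\hat{\mathbf{G}}$), which is exactly your kernel-preservation argument phrased without the explicit detour through Theorem~\ref{T1}.
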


Proposition \ref{P4} reveals that specialized equilibria are robust to increases in network reciprocity: any specialized equilibrium of $\Gamma(\mathbf{G})$ remains an equilibrium when directed edges are converted to bidirectional ones. The economic intuition is straightforward. Consider first a free-rider in the specialized equilibrium of $\Gamma(\mathbf{G})$. In the more reciprocal network $\hat{\mathbf{G}}$, this player receives a weakly larger amount of public good (since some contributors who previously could not reach them now can), making free-riding even more attractive as a best response. For a contributor in the specialized equilibrium of $\Gamma(\mathbf{G})$, the independence property ensures that this player has no outgoing connections with any other contributor. Converting edges bidirectional cannot create such connections, so providing $e^*$ remains optimal in $\hat{\mathbf{G}}$.

Importantly, this monotonicity is specific to specialized equilibria and fails for equilibria with interior contributions. Consider the distributed equilibrium $\mathbf{e}=(\frac{e^*}{2},\frac{e^*}{2},\frac{e^*}{2})$ in the directed cycle from Example \ref{E1}. This profile ceases to be an equilibrium in $\Gamma(s(\mathbf{G}))$ where all edges become bidirectional. In this fully reciprocal network, each player receives $e^*$ from their two neighbors, making it optimal to reduce their own contribution to zero rather than maintain $\frac{e^*}{2}$.

Proposition \ref{P4} immediately yields a sufficient condition for equilibrium existence. When networks $\mathbf{G}$ and $\mathbf{G}^{\prime}$ share the same architecture (i.e., $s(\mathbf{G})=s(\mathbf{G}^{\prime})$) with $\mathbf{G}^{\prime}\subseteq\mathbf{G}$, the game $\Gamma(\mathbf{G})$ admits a specialized equilibrium whenever $\Gamma(\mathbf{G}^{\prime})$ does. This nested relationship implies that within any architecture, \textit{strict} digraphs—those with only directed edges—represent the most restrictive case for equilibrium existence. If specialized equilibria exist in these least reciprocal networks, they necessarily persist as edges become bidirectional.

This observation allows us to characterize the relationship between specialized equilibria in undirected networks (corresponding to maximal independent sets) and those in strict digraphs (corresponding to kernels).

\begin{corollary} \label{C2}
    For any digraph $\mathbf{G}$, the set of contributors in each specialized equilibrium of $\Gamma(\mathbf{G})$ forms a maximal independent set of $s(\mathbf{G})$. Conversely, for any specialized equilibrium of the undirected game $\Gamma(\mathbf{\bar{G}})$, there exists a strict digraph $\mathbf{G}$ such that $s(\mathbf{G})=\mathbf{\bar{G}}$ and $\Gamma(\mathbf{G})$ admits the same specialized equilibrium.
\end{corollary}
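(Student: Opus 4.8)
The plan is to handle the two directions separately, using Theorem \ref{T1} (kernel $\Leftrightarrow$ specialist set) together with Proposition \ref{P4} (monotonicity under partial symmetrization) and the fact that $\mathbf{G}\subseteq s(\mathbf{G})$ with $s(\mathbf{G})\in ps(\mathbf{G})$.

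For the first (forward) direction, let $\mathbf{e}$ be a specialized equilibrium of $\Gamma(\mathbf{G})$ with specialist set $K$. By Theorem \ref{T1}, $K$ is a kernel of $\mathbf{G}$. Since $\mathbf{G}\subseteq s(\mathbf{G})$ and $s(\mathbf{G})\in ps(\mathbf{G})$, Proposition \ref{P4} implies $\mathbf{e}$ is also a specialized equilibrium of $\Gamma(s(\mathbf{G}))=\Gamma(\bar{\mathbf{G}})$; hence $K$ is a kernel of the undirected graph $\bar{\mathbf{G}}$. As noted in the discussion preceding Definition \ref{D2}, in an undirected graph kernels coincide with maximal independent sets, so $K$ is a maximal independent set of $s(\mathbf{G})$. (Alternatively, one can argue directly: independence of $K$ in $\mathbf{G}$ plus the absence of any $g_{ij}=1$ or $g_{ji}=1$ between two specialists — forced because a specialist receives nothing from other specialists — gives independence in $\bar{\mathbf{G}}$; and each outsider $i$ has $g_{ij}=1$ for some $j\in K$, so $\bar g_{ij}=1$, giving maximality.)

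For the converse direction, let $\mathbf{e}$ be a specialized equilibrium of the undirected game $\Gamma(\bar{\mathbf{G}})$ with specialist set $I$, which by Theorem \ref{T1} (applied to $\bar{\mathbf{G}}$) is a kernel — equivalently a maximal independent set — of $\bar{\mathbf{G}}$. I construct a strict digraph $\mathbf{G}$ by orienting every edge of $\bar{\mathbf{G}}$: for each edge $\{i,j\}$ with exactly one endpoint in $I$, orient it from the outsider toward the member of $I$; for each edge $\{i,j\}$ with neither endpoint in $I$, orient it arbitrarily (say from the lower-indexed to the higher-indexed vertex). No edge has both endpoints in $I$ since $I$ is independent, so every edge gets a single direction and $\mathbf{G}$ is strict with $s(\mathbf{G})=\bar{\mathbf{G}}$. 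Then $I$ is independent in $\mathbf{G}$, and every $i\notin I$ has a neighbor $j\in I$ in $\bar{\mathbf{G}}$ (by maximality of $I$), and by construction the edge $\{i,j\}$ was oriented as $g_{ij}=1$; hence $I$ is a kernel of $\mathbf{G}$. By Theorem \ref{T1}, the profile assigning $e^*$ to $I$ and $0$ elsewhere — which is exactly $\mathbf{e}$ — is a specialized equilibrium of $\Gamma(\mathbf{G})$.

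Neither direction presents a serious obstacle; the only point requiring mild care is checking in the construction that the orientation is well-defined and yields a \emph{strict} digraph (each edge oriented exactly one way, no bidirectional pairs), which follows immediately from the independence of $I$. If one prefers not to invoke Proposition \ref{P4} for the forward direction, the small direct argument sketched parenthetically above is fully self-contained and equally short.
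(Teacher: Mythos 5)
Your proposal is correct and follows essentially the same route as the paper: the forward direction via Proposition \ref{P4} plus the coincidence of kernels and maximal independent sets in undirected graphs, and the converse via the same edge-orientation construction (outsider $\to$ kernel member, with edges between outsiders oriented by index). The paper's appendix proof only writes out the converse direction, but the construction and the verification that the resulting strict digraph has $I$ as a kernel match yours step for step.
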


The first statement follows directly from Proposition \ref{P4}: since any specialized equilibrium persists under symmetrization, the contributors must form a maximal independent set of $s(\mathbf{G})$. The second statement establishes the converse direction. Given any specialized equilibrium of the undirected game $\Gamma(\mathbf{\bar{G}})$ with contributor set $K$, we can construct a strict digraph $\mathbf{G}$ by orienting edges appropriately: convert bidirectional edges between contributors and free-riders into directed edges from free-riders to contributors, and arbitrarily orient edges between free-riders (which are payoff-irrelevant in this equilibrium). The resulting strict digraph admits the same specialized equilibrium.

These results provide a practical algorithm for verifying specialized equilibrium existence in $\Gamma(\mathbf{G})$: first identify all maximal independent sets of the symmetrization $s(\mathbf{G})$, then verify for each whether it can serve as a contributor set in a specialized equilibrium by checking the coverage condition.

\begin{example}\label{E3}
Consider the three digraphs $\mathbf{G}\subseteq \hat{\mathbf{G}}\subseteq s(\mathbf{G})$ illustrated in Figure \ref{F3}, which demonstrate how increasing network reciprocity affects the set of specialized equilibria. The strict digraph $\mathbf{G}$ contains only directed edges, the partially symmetric digraph $\hat{\mathbf{G}}$ adds some bidirectional connections, and $s(\mathbf{G})$ is the fully symmetric (undirected) network. 

The number of specialized equilibria increases monotonically with reciprocity: $k(\mathbf{G})=1$, $k(\hat{\mathbf{G}})=2$, and $k(s(\mathbf{G}))\ge 3$. The profile $\mathbf{e}=(e^*,0,0,0,0,e^*,0)$ is a specialized equilibrium common to all three networks, with players $1$ and $6$ as contributors (forming the kernel $\{1,6\}$). This equilibrium persists across all three networks, confirming Proposition \ref{P4}: specialized equilibria are preserved as reciprocity increases.

The partially symmetric network $\hat{\mathbf{G}}$ admits an additional specialized equilibrium $\mathbf{e}=(0,e^*,e^*,e^*,e^*,0,0)$ with contributor set $\{2,3,4,5\}$, which is also a specialized equilibrium of $s(\mathbf{G})$. The fully symmetric network $s(\mathbf{G})$ supports at least one more specialized equilibrium, such as $\mathbf{e}=(0,e^*,e^*,0,e^*,e^*,0)$ with contributor set $\{2,3,5,6\}$, which corresponds to another maximal independent set.

To illustrate the converse direction of Corollary \ref{C2}, consider the specialized equilibrium $(e^*,0,0,0,0,e^*,0)$ of $\Gamma(s(\mathbf{G}))$. The strict digraph $\mathbf{G}$ shown in the figure preserves this equilibrium, but it is not the unique strict digraph that does so. For instance, an alternative strict digraph with $g_{46}=1$ instead of $g_{64}=1$ (while maintaining all other edges) would also support the same specialized equilibrium, as long as each free-rider receives contributions from at least one specialist. This illustrates that the mapping from specialized equilibria of undirected games to supporting strict digraphs is generally one-to-many.

\begin{figure}[h]
    \centering
    \includegraphics[width=0.9\linewidth]{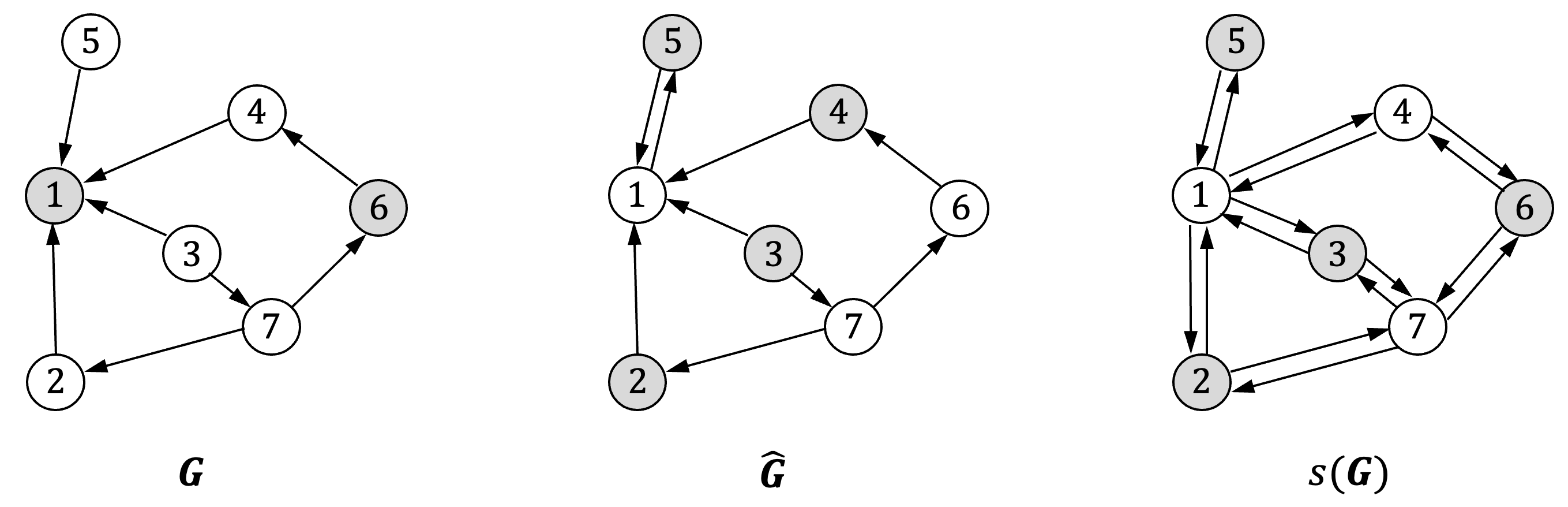}
    \caption{Network Reciprocity and Specialized Equilibria: Three networks with identical architecture but varying reciprocity levels.}
    \label{F3}
\end{figure}
\end{example}

\subsection{Elimination Algorithm}

We now present an alternative approach to identifying specialized equilibria that complements the kernel method. The key insight is that certain players' equilibrium actions can be determined independently of the full network structure. We exploit this by iteratively eliminating players whose strategic responses are unambiguous, reducing the problem to a simpler subgraph.

\begin{algorithm}[H]
\caption{Iterative Node Elimination}
\label{alg:node_elimination}
\begin{algorithmic}
\STATE Given a digraph $\mathbf{G}$, initialize $\mathbf{G}_0 = \mathbf{G}$ and set $t=0$.
\REPEAT
\STATE \textbf{Step 1:} Identify nodes without out-neighbors: $I = \{i \in N \mid \nexists j\in N \text{ s.t. } g_{ij}=1\}$.
\STATE \textbf{Step 2:} Identify nodes with at least one out-neighbor from $I$: $I^\prime = \{i' \in N \mid \exists i \in I \text{ s.t. } g_{i'i}=1\}$. 
\STATE \textbf{Step 3:} Eliminate nodes in $I \cup I'$ and their incident edges.
\STATE \textbf{Step 4:} Eliminate nodes without in-neighbors: $I'' = \{i \in N \mid \nexists j\in N \text{ s.t. } g_{ji}=1\}$.
\STATE Update $\mathbf{G}_{t+1}$ as the resulting subgraph and increment $t$.
\UNTIL{$\mathbf{G}_{t+1} = \mathbf{G}_t$}
\end{algorithmic}
\end{algorithm}

The algorithm terminates in finite time since $n$ is finite and each iteration either strictly reduces the number of nodes or leaves the graph unchanged. We denote by $\mathbf{G}_{\infty}$ the subgraph remaining when the algorithm converges. The economic logic behind each elimination step is as follows. Nodes in $I$ (Step 1) cannot receive contributions from others, so they must contribute $e^*$ in any specialized equilibrium. Nodes in $I'$ (Step 2) are guaranteed to receive contributions from at least one specialist (those in $I$), so they optimally free-ride. Nodes in $I''$ (Step 4) have no influence on others' payoffs and can be ignored when determining the contributor set. This systematic identification of players with determinate strategies leads to our next result.

\begin{proposition}[Elimination Algorithm and Specialized Equilibrium]\label{P5}
    For any digraph $\mathbf{G}$:
    
    (i) If $\mathbf{G}_{\infty}$ is non-empty, then $\Gamma(\mathbf{G})$ has the same number of specialized equilibria as $\Gamma(\mathbf{G}_{\infty})$.

    (ii) If $\mathbf{G}_{\infty}$ is empty, then $\Gamma(\mathbf{G})$ has a unique specialized equilibrium.
\end{proposition}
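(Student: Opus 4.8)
The plan is to run everything through kernels. By Theorem~\ref{T1} a specialized profile is a Nash equilibrium if and only if its specialist set is a kernel of the underlying digraph, and a specialized profile is pinned down by its specialist set, so the number of specialized equilibria of $\Gamma(\mathbf{H})$ equals the number $k(\mathbf{H})$ of kernels of $\mathbf{H}$. It therefore suffices to show that one full iteration of the algorithm preserves the number of kernels, i.e., $k(\mathbf{G}_t)=k(\mathbf{G}_{t+1})$; since the algorithm terminates after finitely many iterations, composing the per-iteration bijections gives $k(\mathbf{G})=k(\mathbf{G}_\infty)$. Part~(i) is then immediate, and part~(ii) follows because the empty digraph has exactly one kernel, namely $\varnothing$ (vacuously independent and vacuously dominating), so in that case $k(\mathbf{G})=1$.

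For a single iteration I would treat the deletion of $I\cup I'$ (Steps~1--3) and of the sources $I''$ (Step~4) separately. For Steps~1--3: a node in $I$ has no out-neighbor, hence cannot be dominated, hence lies in every kernel $K$; consequently no in-neighbor of such a node lies in $K$ by independence, so $I\subseteq K$ and $K\cap I'=\varnothing$ for every kernel $K$ (and $I\cap I'=\varnothing$ because sinks have no out-edges). Writing $\mathbf{H}$ for the subgraph induced on $N\setminus(I\cup I')$, I would verify that $K\mapsto K\setminus(I\cup I')$ is a bijection from kernels of $\mathbf{G}$ onto kernels of $\mathbf{H}$, with inverse $K'\mapsto K'\cup I$. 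Independence transfers because there is no edge between $I$ and $N\setminus(I\cup I')$ (a node of $\mathbf{H}$ with an out-edge into $I$ would belong to $I'$). Domination transfers because a witness for a node $u\in\mathbf{H}\setminus K$ cannot lie in $I$ (as $u\notin I'$) nor in $I'$ (disjoint from $K$), so it lies in $\mathbf{H}$; while every node of $I'$ is dominated through $I$.

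For Step~4 the delicate point is that a node $v$ with no in-neighbor is not forced into or out of a kernel---its membership is determined by whether any of its out-neighbors is in the kernel. Deleting one such $v$ and writing $\mathbf{H}'=\mathbf{H}-v$, I would show $K\mapsto K\setminus\{v\}$ is a bijection from kernels of $\mathbf{H}$ onto kernels of $\mathbf{H}'$, with inverse sending a kernel $K'$ of $\mathbf{H}'$ to $K'$ if some out-neighbor of $v$ lies in $K'$ and to $K'\cup\{v\}$ otherwise. The verification uses crucially that no edge points into $v$: adding $v$ back never violates independence, and no node is dominated \emph{by} $v$, so deleting $v$ never breaks domination; it then remains only to check that $v$ itself is correctly dominated or included, and that the two maps compose to the identity. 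Since every node of $I''$ remains a source as other sources are deleted, applying this one node at a time yields a bijection between the kernels before and after Step~4.

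Composing the Steps~1--3 bijection with the Step~4 bijection gives $k(\mathbf{G}_t)=k(\mathbf{G}_{t+1})$ (trivially so when the iteration leaves the graph unchanged), and chaining over the finitely many iterations yields $k(\mathbf{G})=k(\mathbf{G}_\infty)$; with Theorem~\ref{T1} this is part~(i). For part~(ii), if $\mathbf{G}_\infty=\varnothing$ then $k(\mathbf{G})=k(\varnothing)=1$, so $\Gamma(\mathbf{G})$ has a unique specialized equilibrium. The main obstacle I expect is precisely the Step~4 bijection: unlike the sinks removed in Steps~1--3, source nodes are not forced into or out of the kernel, so one must exhibit the correct reconstruction rule and check it is well defined and inverse to restriction; everything else is routine bookkeeping on induced subgraphs.
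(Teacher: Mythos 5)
Your proof is correct and follows essentially the same strategy as the paper's: establish a per-iteration bijection between specialized equilibria (equivalently, via Theorem~\ref{T1}, kernels) of $\Gamma(\mathbf{G}_t)$ and $\Gamma(\mathbf{G}_{t+1})$, then compose over the finitely many iterations. The only difference is that you carry the argument out in kernel language and spell out the Step~4 reconstruction rule for source nodes (membership determined by whether an out-neighbor lies in the kernel), which the paper's proof asserts only informally.
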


The algorithm is particularly effective for acyclic digraphs. A fundamental property of acyclic digraphs is that every such graph (and hence every induced subgraph) contains at least one node with in-degree zero and at least one node with out-degree zero (\cite{bang2008digraphs}, page 13, Proposition 1.4.2). Therefore, when $\mathbf{G}$ is acyclic, each iteration strictly reduces the number of nodes until $\mathbf{G}_{\infty}=\emptyset$, confirming the existence and uniqueness of specialized equilibria in acyclic networks.

For general digraphs, the algorithm may terminate with a non-empty $\mathbf{G}_{\infty}$, requiring further analysis of the remaining subgraph. The relative effectiveness of the elimination algorithm versus the kernel approach depends on the network structure. While the kernel method applies universally, the elimination algorithm can be more efficient when it significantly reduces the problem size. Interestingly, cycles may disappear during the elimination process even when they exist in the original graph, as nodes on the cycle may be removed in Step 2 when they have contributors among their in-neighbors.

\begin{example}\label{E4}
Consider the digraph $\mathbf{G}$ shown in Figure \ref{F4}, which contains the odd cycle $\{1,8,4,3,2,1\}$. This cycle prevents direct application of the kernel approach since odd cycles admit no independent sets covering all vertices. However, the elimination algorithm efficiently resolves this complexity.

In the first iteration, Step 1 identifies $I=\{5\}$ (light grey) as the unique node without out-neighbors, so player $5$ must contribute $e^*$ in any specialized equilibrium. Step 2 then identifies $I'=\{4,6\}$ (dark grey) as nodes guaranteed to receive contributions from player $5$, making free-riding their unique best response. Step 3 eliminates $\{4,5,6\}$ from the network. Step 4 removes player $3$, who after the previous eliminations has no remaining out-neighbors and thus exerts no strategic influence. This yields subgraph $\mathbf{G}_1$ containing only $\{1,2,7,8\}$.

Repeating the algorithm on $\mathbf{G}_1$, we find that player $7$ has no out-neighbors (so must contribute), player $8$ can free-ride on player $7$, and both can be eliminated. This leaves the residual subgraph $\mathbf{G}_\infty$ consisting of the bidirectional edge $1 \leftrightarrow 2$. The game $\Gamma(\mathbf{G}_\infty)$ admits exactly two specialized equilibria: either player $1$ contributes while player $2$ free-rides, or vice versa. By Proposition \ref{P3}, $\Gamma(\mathbf{G})$ inherits these two equilibria. Tracing back through the eliminated players, the complete equilibria are $\mathbf{e}=(e^*,0,0,0,e^*,0,e^*,0)$ and $\mathbf{e}'=(0,e^*,0,0,e^*,0,e^*,0)$.
\end{example}

\begin{figure}[h]
    \centering
    \includegraphics[width=0.9\linewidth]{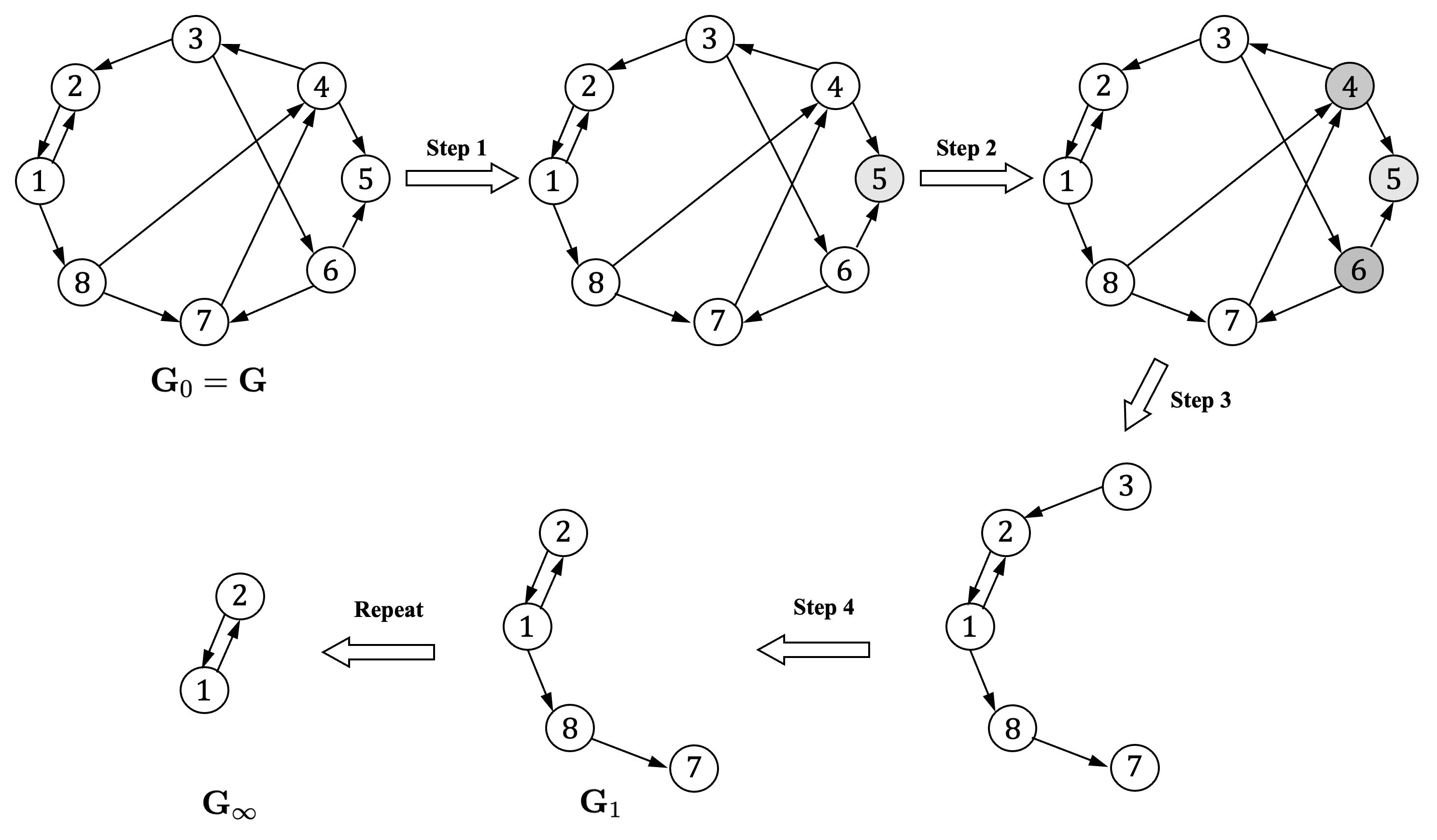}
    \caption{Elimination Algorithm and Specialized Equilibria}
    \label{F4}
\end{figure}

\section{Equilibrium Stability in Directed Networks}

Having analyzed the existence of specialized equilibria, we now examine which equilibria are stable under best-response dynamics. Following \cite{bramoulle2007public}, we adopt a natural refinement by excluding equilibria that are not robust to small perturbations.

Let $\mathbf{f}=(f_1,f_2,\dots,f_n)$ denote the best-response function profile, where $f_i(\mathbf{e})=\max\{0,e^*-\sum_{j=1}^n g_{ji}e_j\}$ for each player $i$. An equilibrium $\mathbf{e}$ is \textit{stable} if there exists $\rho>0$ such that for any bounded legitimate perturbation profile $\boldsymbol{\varepsilon}$ satisfying $|\varepsilon_i|\le \rho$ and $e_i+\varepsilon_i\ge 0$ for all $i$, the sequence $\{\mathbf{e}^{(r)}\}$ defined by $\mathbf{e}^{(0)}=\mathbf{e}+\boldsymbol{\varepsilon}$ and $\mathbf{e}^{(r+1)}=\mathbf{f}(\mathbf{e}^{(r)})$ converges to $\mathbf{e}$. Intuitively, a stable equilibrium attracts nearby strategy profiles under iterative best responses.

As in the undirected case, non-specialized equilibria are inherently unstable in directed networks, justifying our focus on specialized equilibria throughout this analysis.

\begin{lemma}\label{L1}
    Any non-specialized Nash equilibrium of $\Gamma(\mathbf{G})$ is unstable.
\end{lemma}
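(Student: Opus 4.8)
The plan is to show that if $\mathbf{e}$ is a Nash equilibrium that is \emph{not} specialized, then there is a player with an interior contribution, and perturbations can be chosen so that iterated best responses fail to return to $\mathbf{e}$. First I would recall the structure of Nash equilibria in this model: by the first-order conditions, for each player $i$ either $e_i=0$ and $\sum_j g_{ji}e_j \ge e^*$ (wait—careful: the relevant quantity for player $i$ is the total stock $e_i+\sum_{j\in N_i}e_j$; at a Nash equilibrium this stock equals $e^*$ whenever $e_i>0$, and is $\ge e^*$ whenever $e_i=0$). So at any Nash equilibrium every player's accessible stock is at least $e^*$, and it equals exactly $e^*$ for every contributing player. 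Since $\mathbf{e}$ is not specialized, there is at least one player $i_0$ with $0<e_{i_0}<e^*$; such a player has $\sum_{j\in N_{i_0}} e_j = e^* - e_{i_0} \in (0,e^*)$, i.e., a strictly positive but strictly sub-threshold inflow from out-neighbors.

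Next I would exploit the kink in the best-response map. The best response $f_i(\mathbf{e})=\max\{0, e^* - \sum_j g_{ji}e_j\}$ is only a \emph{weak} contraction along the affine piece and, crucially, the equilibrium sits on a face where several players are exactly at the threshold. The idea, following \cite{bramoulle2007public}, is to perturb the interior player $i_0$ downward by $\delta>0$ (and leave others unperturbed, or perturb them consistently). Because $i_0$'s out-neighbors were receiving total inflow exactly $e^*$ in part \emph{from} $i_0$, lowering $e_{i_0}$ pushes each out-neighbor $k$ of $i_0$ that was contributing (i.e. $k$ with $e_k>0$ and $g_{ki_0}=1$) strictly below threshold, so $k$'s best response rises by exactly $\delta g_{ki_0}$; meanwhile $i_0$'s own best response rises by $\delta$ times the number/sum of those increases routed back, and one tracks the linear dynamics on the support. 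The key point is that on the set of currently-contributing players the best-response dynamics is governed by an affine map $\mathbf{e}\mapsto e^*\mathbf{1}-\tilde{\mathbf G}^{\top}\mathbf{e}$ (restricted to the relevant coordinates), whose linear part $-\tilde{\mathbf G}^{\top}$ has spectral radius $\ge 1$ unless that submatrix is nilpotent—and it is not nilpotent precisely because the interior players form a subgraph in which everyone has positive in-flow \emph{and} positive out-contribution, forcing a directed cycle among contributors. Along such a cycle the composed best-response map is an isometry (or expansion) in an appropriate direction, so the perturbation neither shrinks nor returns: $\mathbf{e}^{(r)}\not\to\mathbf{e}$.

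Concretely, the cleanest route is: (1) show the set $S=\{i: 0<e_i<e^*\}$ is nonempty; (2) show every $i\in S$ has an out-neighbor in $\{j:e_j>0\}$ and (by the stock identity) that restricting attention to a minimal such configuration yields a directed cycle $i_1\to i_2\to\cdots\to i_m\to i_1$ of contributors each receiving total inflow exactly $e^*$ and each with own contribution in $(0,e^*)$ or at least with the property that a small downward perturbation stays in the linear regime; (3) perturb $e_{i_1}\mapsto e_{i_1}-\delta$ with $\delta$ small, compute that after $m$ steps of best response the cycle coordinates return to a perturbation of the same size $\delta$ (times a product of $1$'s), hence the sequence does not converge to $\mathbf{e}$; conclude $\mathbf{e}$ is unstable. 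One must also handle the non-cycle coordinates (players with $e_i=0$ strictly above threshold absorb small perturbations and stay at $0$; players at exactly threshold must be checked not to damp the cyclic perturbation, which holds for $\delta$ small enough that signs are preserved).

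The main obstacle is step (2)/(3): carefully arguing that a directed cycle of ``active'' players must exist and that along it the best-response composition is non-contractive. The subtlety is that a perturbation could in principle leak out of the cycle (to threshold players or to free-riders) and get damped before returning; handling this requires choosing $\delta$ small enough to stay within the affine regime and tracking that the relevant submatrix of $\tilde{\mathbf G}^{\top}$ restricted to the perturbed support is a permutation-like (hence isometric) operator in the direction of the perturbation, so no damping occurs. This is exactly the place where directedness matters relative to \cite{bramoulle2007public}, and where one leans on the characterization of specialized equilibria (Theorem \ref{T1}) to know that the \emph{only} Nash equilibria without such active cycles are the specialized ones.
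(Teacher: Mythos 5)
Your overall strategy coincides with the paper's intended one: the appendix does not actually write out a proof of Lemma~\ref{L1}, but merely describes it as ``a directed version of the corresponding proof in \cite{bramoulle2007public}'', and your plan---find a directed cycle of positive contributors and propagate a non-damping perturbation around it---is precisely that directed analogue (in the undirected case the ``cycle'' is just a bidirectional link between two positive contributors). Your step (2) is correct and can be closed by a one-line argument you only gesture at: if $0<e_i<e^*$ then $i$'s inflow equals $e^*-e_i\in(0,e^*)$, so $i$ has an out-neighbour with positive effort, and no out-neighbour of $i$ can exert $e^*$ (that term alone would push the inflow to at least $e^*$); hence every member of $S=\{i:0<e_i<e^*\}$ has an out-neighbour \emph{in} $S$, and finiteness of $S$ forces a directed cycle whose nodes all have interior effort. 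Your step (1) is also right: any Nash equilibrium with all actions in $\{0,e^*\}$ automatically satisfies Definition~1, so a non-specialized equilibrium must contain an interior contributor.

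The genuine gap is in step (3), exactly where you flag it. The claim that the composed best response along the cycle is ``permutation-like (hence isometric)'' holds only for the linearization restricted to the cycle in isolation. In the full network the linearized update is $\boldsymbol{\varepsilon}\mapsto -A\boldsymbol{\varepsilon}$ with $A$ the out-adjacency matrix; the entry $(A^{r})_{i_1 i_1}$ counts \emph{all} closed walks through $i_1$ and may grow, and the $\max\{0,\cdot\}$ kink truncates the perturbation at free-riders whose inflow is exactly $e^*$ and at any player whose unconstrained response would go negative. You have not shown that these truncations cannot eventually absorb the returning signal, nor that a perturbation that grows and exits the affine regime cannot subsequently be pulled back to $\mathbf{e}$. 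What is available and should be made explicit is that, in the purely affine regime, a single-coordinate perturbation $\varepsilon_{i_1}^{(0)}=-\delta$ yields $\varepsilon_i^{(r)}=(-1)^{r+1}(A^r)_{i i_1}\delta$ with all walk contributions of the same sign, so $|\varepsilon_{i_1}^{(km)}|\ge\delta$ for every $k$; the remaining work is to choose $\delta$ small enough, and to control the kinked coordinates, so that this lower bound survives. To be fair, the paper supplies none of these details either, so your sketch is no less complete than the published argument---but as a standalone proof it is not yet closed.
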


The instability of non-specialized equilibria stems from strategic substitution between contributors: when multiple players contribute positive but interior amounts, small perturbations trigger cascading adjustments that fail to converge back to the original equilibrium.

We exploit the elimination algorithm introduced in the previous section to analyze equilibrium stability. We use $\mathbf{e}^{\mathbf{G}}$ and $\mathbf{e}^{\mathbf{G}_\infty}$ to denote the strategy profiles in games $\Gamma(\mathbf{G})$ and $\Gamma(\mathbf{G}_{\infty})$, respectively, where $\mathbf{G}_\infty$ is the subgraph of $\mathbf{G}$ obtained from Algorithm \ref{alg:node_elimination} and $e_{i}^{\mathbf{G}_\infty}=e_{i}^{\mathbf{G}}$ for any player $i$ in subgraph $\mathbf{G}_\infty$. The following result states that stability is entirely determined by $\Gamma(\mathbf{G}_{\infty})$. We say a kernel $K$ has \textit{order} $k$ if every outsider has at least $k$ outgoing edges to nodes in the kernel, i.e., for all $i\in N\setminus K$, there exist at least $k$ distinct nodes $j_1,\dots,j_k\in K$ such that $g_{ij_{1}}=\dots=g_{ij_{k}}=1$.

\begin{proposition}\label{P6}
A specialized Nash equilibrium $\mathbf{e}^{\mathbf{G}}$ is stable if and only if $\mathbf{e}^{\mathbf{G}_\infty}$ is a stable equilibrium in $\Gamma(\mathbf{G}_{\infty})$. In particular, a specialized Nash equilibrium $\mathbf{e}^{\mathbf{G}}$ (corresponding to kernel $K$) is stable if either of the following conditions holds: 

    (i) $\mathbf{G}_\infty$ is empty;
    
   (ii) $K$ is a kernel of order $2$.
\end{proposition}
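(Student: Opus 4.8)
The plan is to prove the three claims in Proposition~\ref{P6} in sequence, exploiting the structure imposed by Algorithm~\ref{alg:node_elimination} and the fact that best responses decouple across the eliminated blocks.

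\textbf{Step 1: Reducing stability to $\Gamma(\mathbf{G}_\infty)$.} First I would show that in any specialized equilibrium $\mathbf{e}^{\mathbf{G}}$, the coordinates of players eliminated in each round are ``pinned down'' in a way that makes them robust to small perturbations regardless of what happens elsewhere. Concretely, a node $i\in I$ from Step~1 has no out-neighbors, so $f_i(\mathbf{e}) = e^*$ identically; after one best-response step its coordinate returns to $e^*$ and stays there, for \emph{any} perturbation with $\rho < e^*$. A node $i'\in I'$ from Step~2 receives, from its contributing in-neighbor(s) in $I$, a total of at least $e^*$; once those neighbors have settled at $e^*$ (after one step), $f_{i'} = 0$. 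For $\rho$ small, at the very next step $i'$ is driven to $0$ and remains there. A node $i''\in I''$ from Step~4 has no in-neighbors in the current subgraph, hence its value never feeds into anyone's best response and can be dropped. The key observation is that the remaining nodes $\mathbf{G}_1$ form a subgraph on which best responses depend only on coordinates within $\mathbf{G}_1$ together with the now-stabilized contributions from eliminated nodes, which after finitely many steps contribute a fixed constant to each surviving node's ``received amount.'' Iterating this over the (finitely many) rounds of the algorithm, I would conclude that after a finite number of best-response steps the dynamics restricted to $\mathbf{G}_\infty$ is \emph{exactly} the best-response dynamics of $\Gamma(\mathbf{G}_\infty)$ (the pinned-down eliminated nodes add nothing to in-degrees of surviving nodes, since any such incoming edge was accounted for in Steps 1--4), while the eliminated coordinates have already converged to their equilibrium values. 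Hence $\mathbf{e}^{\mathbf{G}}$ is stable iff $\mathbf{e}^{\mathbf{G}_\infty}$ is stable in $\Gamma(\mathbf{G}_\infty)$.

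\textbf{Step 2: Case (i).} If $\mathbf{G}_\infty$ is empty, there are no surviving nodes, so the reduced condition is vacuous and Step~1 immediately gives stability; this also re-derives the uniqueness part of Proposition~\ref{P5}(ii).

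\textbf{Step 3: Case (ii).} Suppose $K$ is a kernel of order $2$. By the equivalence just established it suffices to show $\mathbf{e}^{\mathbf{G}_\infty}$ is stable, and I would argue directly in $\mathbf{G}$ (the order-$2$ property is inherited by the relevant residual structure, or one argues in $\mathbf{G}$ and invokes Step~1). Contributors $i\in K$ have no out-neighbors in $K$ (independence), so $f_i = e^*$ regardless of the perturbation, and after one step every contributor sits exactly at $e^*$. Now fix a free-rider $i\notin K$: it has at least two in-neighbors $j_1,j_2\in K$, each of whom contributes $e^*$ after step~$1$, so $i$ receives at least $2e^* > e^*$ from the contributors alone; even after adding a perturbation to $i$'s own coordinate, $f_i(\mathbf{e}^{(r)}) = \max\{0, e^* - (\text{received})\} = 0$ for all $r\ge 1$. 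Thus after one best-response step the entire profile has returned to $\mathbf{e}^{\mathbf{G}}$, so the perturbed orbit converges (indeed in one step) and $\mathbf{e}^{\mathbf{G}}$ is stable. Choosing $\rho$ small, e.g. $\rho < e^*/2$, ensures all the inequalities above are strict and the ``$\ge e^*$ received'' conclusions survive the perturbation.

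\textbf{Main obstacle.} The delicate part is Step~1: making rigorous the claim that the best-response dynamics of $\Gamma(\mathbf{G})$, after the transient during which eliminated coordinates stabilize, \emph{coincides} with that of $\Gamma(\mathbf{G}_\infty)$ — in particular that surviving nodes do not inherit extra ``received contributions'' from eliminated nodes (this is exactly why Step~4 removes in-degree-zero nodes and why an edge from a surviving node into an eliminated $I''$ node is harmless). One must handle the bookkeeping across multiple rounds and confirm that a perturbation of an $I''$-type node, though it may oscillate, is simply irrelevant because it never enters anyone's best response, and that the convergence radius $\rho$ can be chosen uniformly over all rounds (a finite minimum of positive quantities). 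I would organize this as an induction on the round index $t$, with inductive hypothesis ``after $t$ steps the coordinates of all nodes eliminated in rounds $\le t$ equal their equilibrium values and the dynamics on $\mathbf{G}_t$ is the $\Gamma(\mathbf{G}_t)$-dynamics.''
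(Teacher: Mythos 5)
Your proposal follows essentially the same route as the paper: the paper's Appendix proves a lemma showing that $\mathbf{e}^{\mathbf{G}_{t}}$ is stable in $\Gamma(\mathbf{G}_{t})$ if and only if $\mathbf{e}^{\mathbf{G}_{t+1}}$ is stable in $\Gamma(\mathbf{G}_{t+1})$, by tracking the best-response dynamics round by round exactly as in your Step 1 (nodes in $I$ lock at $e^*$ after one iterate, nodes in $I'$ lock at $0$ after two, nodes in $I''$ never enter anyone's best response), and then handles (i) trivially and (ii) by the order-2 redundancy argument. One small slip in your Step 3: a contributor $i\in K$ does have out-neighbors, namely free-riders, and a legitimate perturbation can push a free-rider's effort up to $\rho>0$, so $f_i(\mathbf{e}+\boldsymbol{\varepsilon})=e^*-\sum_{j\in N_i}\varepsilon_j<e^*$ in general; hence the profile does not return to $\mathbf{e}^{\mathbf{G}}$ in one step. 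The fix is immediate: for $\rho$ small, free-riders (receiving at least $2(e^*-\rho)>e^*$) drop to exactly $0$ at the first iterate, after which contributors return to exactly $e^*$ at the second, so convergence occurs in two steps rather than one. (Also note that, in the paper's convention, the two kernel members covering a free-rider are its \emph{out}-neighbors, not in-neighbors; this does not affect the substance of your argument.)
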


Since eliminated nodes have determinate responses, they cannot destabilize the equilibrium—stability is inherited from the remaining subgraph $\mathbf{G}_\infty$. When $\mathbf{G}_\infty$ is empty, all strategic choices are fully determined and independent of neighbors' choices, guaranteeing stability. 

Condition (ii) extends the result of \cite{bramoulle2007public} from undirected to directed networks. When each free-rider has at least two contributing neighbors, this redundancy in coverage provides robustness against perturbations: even if one contributor temporarily reduces their contribution, another ensures sufficient provision to prevent free-riders from switching strategies. The key difference from \cite{bramoulle2007public} is that in undirected networks, maximal independent sets of order 2 characterize stable equilibria, whereas in directed networks, kernels of order 2 provide only a sufficient but not necessary condition for stability, as illustrated by the following example.

\begin{example}\label{E5}
Panel (a) depicts the digraph studied in Example 4, where $\mathbf{G}_\infty$ is a clique formed by two players (see $\mathbf{G}_\infty$ in Figure \ref{F4}). Since there is no maximal independent set of order 2 in $\mathbf{G}_\infty$, there is no stable equilibrium in this game. 

Panel (b) presents a network where $\mathbf{G}_\infty=\mathbf{G}$ and the kernel $\{1,2,6,7\}$ has order 2, yielding a stable specialized equilibrium. In contrast, the set $\{3,4,5\}$ also forms a kernel but has only order 1, since node $6$, for instance, has only one outgoing edge $6\to 4$ to the kernel nodes. Consequently, the specialized equilibrium in which $\{3,4,5\}$ are contributors is not stable.

Panel (c) shows the simplest case where condition (ii) fails (the contributor set $\{1\}$ has order 1), yet the equilibrium is stable by condition (i) since $\mathbf{G}_\infty=\emptyset$. This example illustrates that, unlike \cite{bramoulle2007public} where maximal independent sets of order 2 characterize stable equilibria, kernels of order 2 provide only a sufficient condition for stability in directed networks.
\end{example}

\begin{figure}[h]
    \centering
    \includegraphics[width=0.9\linewidth]{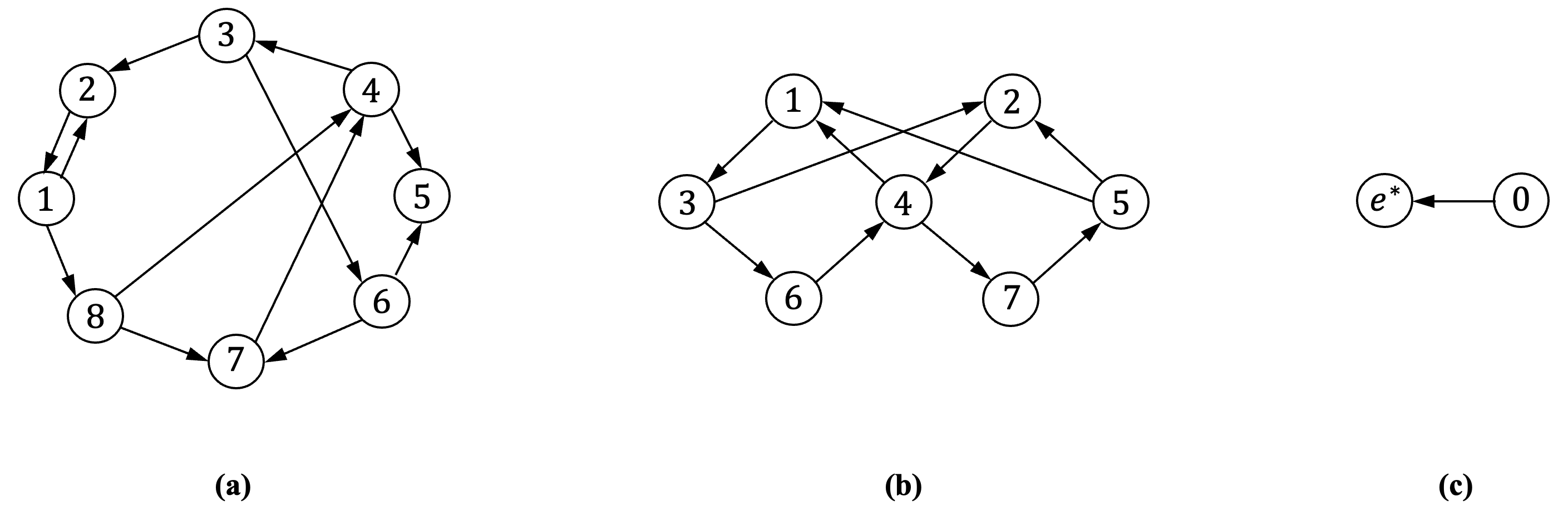}
    \caption{Specialized Equilibria and Stability.}
    \label{F5}
\end{figure}

\section{Discussion and Conclusion}

This paper extends the public goods provision framework of \cite{bramoulle2007public} from undirected to directed networks. We characterize specialized equilibria using the classical concept of kernels introduced by \cite{von2007theory}. Drawing on results from kernel theory in graph theory, we analyze the existence and multiplicity of specialized equilibria in relation to digraph structure. We establish that specialized equilibria exist in acyclic digraphs and almost surely in large random networks. We further show that specialized equilibria are robust to increases in network reciprocity: any specialized equilibrium remains an equilibrium when one-way edges are converted to bidirectional connections, though additional equilibria may emerge in the more reciprocal network. To facilitate equilibrium identification, we propose an iterative elimination algorithm that removes players whose best responses are determined, reducing the analysis to a simpler residual subgraph. Finally, we show that stability of specialized equilibria is preserved under this reduction, and that specialized equilibria are stable when kernels have order 2.

The kernel framework yields novel comparative-static insights. While \cite{bramoulle2007public} shows that adding edges in undirected networks can reduce equilibrium welfare by altering contribution incentives, our Proposition \ref{P4} establishes that converting one-way edges to bidirectional connections preserves all existing specialized equilibria. Consequently, when the welfare-maximizing equilibrium is specialized, bidirectional conversion necessarily improves equilibrium welfare—the preserved equilibrium generates strictly higher surplus due to expanded benefit flows, without creating deviation incentives for any player. This suggests that fostering reciprocity in directed networks with asymmetric relationships is unambiguously welfare-enhancing.

Our theoretical framework yields practical insights for the directed network settings motivating this study. In social media ecosystems, our results suggest that identifying key content creators (kernels) is crucial for understanding organic content provision, as these influencers naturally emerge as specialists while their followers optimally free-ride on generated content. The elimination algorithm provides a tractable method for identifying such influential nodes in large-scale networks. In supply chain innovation, Proposition \ref{P4} implies that fostering reciprocal knowledge-sharing relationships—converting one-way technical spillovers into bidirectional collaborations—weakly increases welfare without discouraging existing innovators. This welfare improvement occurs because reciprocity preserves existing innovation incentives while expanding the benefits from knowledge spillovers to a broader set of firms.

Our results extend naturally along several dimensions. First, for weighted networks, suppose $\mathbf{G}$ is a weighted digraph. A set $K$ of nodes is a kernel if and only if $K$ is an independent set and for all $i\in N\setminus K$, $\sum_{j\in K} g_{ij}\ge 1$. We can then establish a one-to-one correspondence between kernels of $\mathbf{G}$ and specialized equilibria in the public goods provision game under $\mathbf{G}$. The existence and uniqueness theory of kernels in weighted graphs has received limited attention in the literature and provides a novel direction for future research.

Second, our Proposition \ref{P2} demonstrates the existence of specialized equilibria as population size increases by utilizing existing results in kernel theory. We assume that each player knows the exact network structure and precisely with whom they interact, even as $n\to \infty$. Therefore, specialized equilibria are characterized by kernels in the large network. A classical model of large network games is the graphon game framework proposed by \cite{Parise2023}, in which players interact according to a graphon. It would be of interest to extend the concept of kernels to graphons and relate them to specialized equilibria in graphon games. We leave this as a direction for future research.

\newpage
\bibliographystyle{chicago}
\bibliography{ref.bib}

\newpage

\appendix
\renewcommand{\thesection}{\Alph{section}}

\section{Proofs in Section 3}

\subsection{The Proof of Theorem \ref{T1}}

Consider a specialized equilibrium where $K$ is the set of specialists. Specialists play a best-response if all their out-neighbors exert zero effort. For any pair of specialists $i,j\in K$, we must have $g_{ij}=0$; otherwise, specialist $i$ has at least an out-neighbor $j$ taking effort $e^*$. Thus, $K$ is an independent set of the graph. A non-specialist $j\in N\setminus K$ plays a best-response if $\sum_{i\in N_j}e_i\geq e^*$, in a specialized equilibrium which means agent $j$ is an in-neighbor of at least one specialist $i\in K$. Combining both properties yields the necessity. The sufficiency part can be checked directly.

\subsection{The Proof of Proposition \ref{P4}}

Fix a specialized equilibrium $\mathbf{e}$ of $\Gamma(\mathbf{G})$. On one hand, the specialists form an independent set, with $g_{ij}=g_{ji}=0$ for any pair of specialists $i$ and $j$. Thus, $\bar{g}_{ij}=\max\{g_{ij},g_{ji}\}=0$ in the undirected closure $s(\mathbf{G})$. Then, $\hat{g}_{ij}\le \bar{g}_{ij}=0$, which implies that $\hat{g}_{ij}=0$ for any pair of specialists. On the other hand, by the definition of specialized equilibrium, for any non-specialist $i$, $\sum_{j\in N}g_{ij}e_{j}\geq e^{\ast}$. Then, player $i$ in the partial closure $\hat{\mathbf{G}}$ faces total externalities $\sum_{j\in N}\hat{g}_{ij}e_{j}\geq\sum_{j\in N}g_{ij}e_{j}\geq e^{\ast}$. Combining both sides yields the result.

\subsection{The Proof of Corollary \ref{C2}}

Only the converse direction needs to be proved. Suppose $K$ is the set of contributors corresponding to a specific specialized equilibrium of $\Gamma(\mathbf{\bar{G}})$. Define a new digraph $\mathbf{G}$ by setting $g_{ij}=1$ if and only if in case $\bar{g}_{ij}=1$, $j\in K$ and $i\in N\setminus K$ or in case $\bar{g}_{ij}=1$, $i,j\in N\setminus K$ and $i>j$. $G$ is strict since $K\cap (N\setminus K)=\emptyset$. And $K$ is a kernel of the graph $\mathbf{G}$ guaranteed by that $K$ is a maximal independent set of the graph $\mathbf{\bar{G}}$. By Proposition \ref{P1}, the same specialized equilibrium corresponding to $K$ exists in the game $\Gamma(\mathbf{G})$.

\subsection{The Proof of Proposition \ref{P5}}

In a specialized equilibrium, players in set $I$ (with no outgoing edges) must be contributors, as they cannot benefit from any other players. Meanwhile, for a player $i'$ in set $I'$, suppose player $i\in I$ satisfies $i\in N_{i'}$. Since $i$ is a contributor in equilibrium, $i'$ must be a free-rider. Finally, players in $I''$ (with no incoming edges) do not affect the best responses of other players since they do not provide benefits to anyone. All in all, there is a one-to-one relationship between the specialized equilibrium of $\Gamma(\mathbf{G}_{t+1})$ and that of $\Gamma(\mathbf{G}_t)$. Proposition \ref{P5} then follows.

\section{Proofs in Section 4}

The proof of our Lemma \ref{L1} and Proposition \ref{P6} (ii) can be seen as a directed version of the corresponding proof in \cite{bramoulle2007public}. However, their proof that any specialized equilibrium $e$ in which exist an agent $i$ as a non-specialist connected to a unique specialist $j$ is always unstable does not work now. The reason is that their argument relies on a perturbation propagating along an edge and then propagating backward, which does not necessarily hold in the context of directed graphs. Consequently, we cannot obtain necessary and sufficient conditions for stable equilibria analogous to theirs.







Then, we prove the following lemma, and Proposition \ref{P6} follows.

\begin{lemma}\label{L2}
    $\mathbf{e}^{\mathbf{G}_{t+1}}$ is a stable specialized equilibrium of $\Gamma(\mathbf{G}_{t+1})$, if and only if $\mathbf{e}^{\mathbf{G}_{t}}$ is a stable specialized equilibrium of $\Gamma(\mathbf{G}_{t})$.
\end{lemma}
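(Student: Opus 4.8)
The plan is to show that one round of Algorithm~\ref{alg:node_elimination} makes the best-response dynamics of $\Gamma(\mathbf{G}_t)$ decouple along the induced partition $V_t = I\cup I'\cup I''\cup V_{t+1}$ of its vertex set into the three deleted sets and the survivors, with $\mathbf{G}_{t+1}$ the subgraph induced on $V_{t+1}$. Since Proposition~\ref{P5} already matches $\mathbf{e}^{\mathbf{G}_{t+1}}$ with the restriction of $\mathbf{e}^{\mathbf{G}_t}$ to $V_{t+1}$ as specialized equilibria, the only new content is that this match preserves stability. The mechanism is: the $I$- and $I'$-coordinates lock onto their equilibrium values $e^*$ and $0$ within two best-response iterations and stay there; from then on the $V_{t+1}$-coordinates evolve exactly as a best-response trajectory of $\Gamma(\mathbf{G}_{t+1})$; and the $I''$-coordinates are passively slaved to the $V_{t+1}$-coordinates.

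First I would record four structural facts, each immediate from the definitions of $I,I',I''$: (a) every $i\in I$ has empty out-neighbourhood, so $f_i\equiv e^*$; (b) every $i'\in I'$ has an out-neighbour in $I$; (c) no node of $V_{t+1}\cup I''$ has an out-neighbour in $I\cup I''$ --- an out-neighbour in $I$ would place the node in $I'$, and an out-neighbour in $I''$ would make the node an in-neighbour of an $I''$-node within $V_t\setminus(I\cup I')$, contradicting the defining property of $I''$; (d) no node of $I''$ has an in-neighbour in $V_t\setminus(I\cup I')$, so an $I''$-coordinate can influence at most the $I'$-coordinates. From (a), after one step every $I$-coordinate equals $e^*$ forever; from (b), the next step pins every $I'$-coordinate at $0$ forever. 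Once these are pinned, (c) shows that the best response in $\Gamma(\mathbf{G}_t)$ of any $j\in V_{t+1}$ reads only its $I'$-out-neighbours (now $0$) and its $\mathbf{G}_{t+1}$-out-neighbours, hence equals $f_j$ in $\Gamma(\mathbf{G}_{t+1})$; so the $V_{t+1}$-restriction of the trajectory, time-shifted by two, is a genuine $\Gamma(\mathbf{G}_{t+1})$-trajectory. Likewise, by (c)--(d), each $i\in I''$ has $f_i$ depending (once $I'$ is pinned) only on $V_{t+1}$-coordinates, so its orbit converges --- to the value forced by the equilibrium equation --- whenever the $V_{t+1}$-orbit does.

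Given these facts the two implications are short. For ``$\mathbf{e}^{\mathbf{G}_t}$ stable $\Rightarrow$ $\mathbf{e}^{\mathbf{G}_{t+1}}$ stable'': given a legitimate perturbation $\boldsymbol\delta$ of $\mathbf{e}^{\mathbf{G}_{t+1}}$, extend it by $0$ on $I\cup I'\cup I''$ to a legitimate perturbation of $\mathbf{e}^{\mathbf{G}_t}$ of the same sup-norm; by hypothesis the $\Gamma(\mathbf{G}_t)$-orbit returns to $\mathbf{e}^{\mathbf{G}_t}$, and since here the $I$- and $I'$-coordinates start at equilibrium and stay pinned, (c) identifies its $V_{t+1}$-restriction with the $\Gamma(\mathbf{G}_{t+1})$-orbit from $\mathbf{e}^{\mathbf{G}_{t+1}}+\boldsymbol\delta$, which therefore converges --- the same radius works. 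For the converse: given a legitimate perturbation $\boldsymbol\varepsilon$ of $\mathbf{e}^{\mathbf{G}_t}$, run two best-response steps; the $I$- and $I'$-coordinates now sit exactly at their equilibrium values, and because $\mathbf{f}$ is Lipschitz in the sup-norm with some constant $L\ge 1$, the $V_{t+1}$-coordinates then lie within $L^2\|\boldsymbol\varepsilon\|_\infty$ of $\mathbf{e}^{\mathbf{G}_{t+1}}$; taking $\|\boldsymbol\varepsilon\|_\infty\le\rho'/L^2$ with $\rho'$ a stability radius for $\mathbf{e}^{\mathbf{G}_{t+1}}$, the $V_{t+1}$-orbit converges, then the $I''$-orbit converges, and the whole orbit returns to $\mathbf{e}^{\mathbf{G}_t}$.

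I expect the main obstacle to be facts (c)--(d): the care needed is that $I''$ is computed on the graph \emph{after} Step~3, so that edges running from $V_{t+1}$ or $I''$ into the already-deleted set $I\cup I'$ are either absent in $\mathbf{G}_{t+1}$ or carry weight $0$ along the trajectory from the second step onward, and hence never disturb the identification of the two dynamics. A secondary point is the quantifier order for the radius in the converse direction: since $\mathbf{f}$ need not be a contraction, the starting radius must be shrunk by $L^2$ so that two best-response steps keep the $V_{t+1}$-coordinates inside the basin of $\mathbf{e}^{\mathbf{G}_{t+1}}$; the remaining steps are routine continuity arguments and bookkeeping.
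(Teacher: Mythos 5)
Your proposal is correct and follows essentially the same route as the paper's proof: the $I$- and $I'$-coordinates lock onto $e^*$ and $0$ within two best-response iterations, the $V_{t+1}$-dynamics then coincide with those of $\Gamma(\mathbf{G}_{t+1})$, the $I''$-coordinates are slaved to $V_{t+1}$, one direction follows by extending the perturbation by zero and the other by shrinking the radius by a Lipschitz factor. The only (cosmetic) difference is that you treat the partition $I\cup I'\cup I''\cup V_{t+1}$ in a single pass, whereas the paper handles the Step-3 removal and the Step-4 removal as two separate cases and composes them.
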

\begin{proof}
Only the following two simple cases need to be proved and Lemma \ref{L2} holds as a combination of them. To be clear, relabel the players in $\Gamma(\mathbf{G}_{t})$ as $M_t=\{1,2,\dots,m\}$.

\textbf{Case 1: Only nodes $I\cup I'$ is eliminated in step 3 in the $(t+1)$th iteration.} Suppose $M_t=M_{t+1}\cup I\cup I'$ where $M_{t+1}$ is the residual players in $\Gamma(\mathbf{G}_{t+1})$. Then,
\begin{equation}\label{eq3}
\begin{aligned}
    \mathbf{f}_t(\mathbf{e}^{\mathbf{G}_{t}}+ \boldsymbol{\varepsilon}) &= (\mathbf{f}_t^{M_{t+1}}(\mathbf{e}^{\mathbf{G}_{t}}+ \boldsymbol{\varepsilon}),\mathbf{f}_t^{I}(\mathbf{e}^{\mathbf{G}_{t}}+ \boldsymbol{\varepsilon}),\mathbf{f}_t^{I'}(\mathbf{e}^{\mathbf{G}_{t}}+ \boldsymbol{\varepsilon})) \\
    &= (\mathbf{f}_t^{M_{t+1}}(\mathbf{e}^{\mathbf{G}_{t}}+ \boldsymbol{\varepsilon}),\mathbf{e}^{*I},\mathbf{f}_t^{I'}(\mathbf{e}^{\mathbf{G}_{t}}+ \boldsymbol{\varepsilon}))
\end{aligned}
\end{equation}
where $\mathbf{f}_t^X: \mathbb{R}^{|M_t|}\to \mathbb{R}^{|X|}$ denotes the restriction of the response function $\mathbf{f}_t$ of the game $\Gamma(\mathbf{G}_t)$ on selected players in set $X\subseteq M_t$, and  $\mathbf{e}^{*X}$ and $\mathbf{0}^{X}$ means all players in $X$ take effort $e^*$ and $0$, respectively. The second equation of the formula (\ref{eq3}) holds by the definition of the set $I$. As the players in $I'$ must has a out-neighbor in $I$ and all players in $I$ take effort $e=e^*$ now,
\begin{equation*}
\begin{aligned}
    \mathbf{f}_t^{(2)}(\mathbf{e}^{\mathbf{G}_{t}}+ \boldsymbol{\varepsilon}) &=  \mathbf{f}_t(\mathbf{f}_t^{M_{t+1}}(\mathbf{e}^{\mathbf{G}_{t}}+ \boldsymbol{\varepsilon}),\mathbf{e}^{*I},\mathbf{f}_t^{I'}(\mathbf{e}^{\mathbf{G}_{t}}+ \boldsymbol{\varepsilon})) \\
    &= (\mathbf{f}_t^{M_{t+1}(2)}(\mathbf{e}^{\mathbf{G}_{t}}+ \boldsymbol{\varepsilon}),\mathbf{e}^{*I},\mathbf{0}^{I'})
\end{aligned}
\end{equation*}

On the one hand, if $\mathbf{e}^{\mathbf{G}_{t+1}}$ is stable, there exists $\rho>0$ such that $\mathbf{f}_{t+1}^{(n)}(\mathbf{e}^{\mathbf{G}_{t+1}}+ \boldsymbol{\hat{\varepsilon}})\to \mathbf{e}^{\mathbf{G}_{t+1}}$ if for any $i\in M_{t+1}$, $|\hat{\varepsilon}_i|\le \rho$ and $e_i+\hat{\varepsilon}_i\ge 0$. Let $\rho_1=\rho/m^2$, then for $\boldsymbol{\varepsilon}$ such that $|\varepsilon_i|\le \rho_1$ and $e_i+\varepsilon_i\ge 0$, $\exists \tilde{\boldsymbol{\varepsilon}}$ such that for any $i\in M_t$, $|\tilde{\varepsilon}_i|\le m\rho_1$ and $e_i+\tilde{\varepsilon}_i\ge 0$ to make $\mathbf{f}_t(\mathbf{e}^{\mathbf{G}_{t}}+\boldsymbol{\varepsilon})=\mathbf{e}^{\mathbf{G}_{t}}+\tilde{\boldsymbol{\varepsilon}}$. Next, $\exists \hat{\boldsymbol{\varepsilon}}$ such that for any $i\in M_t$, $|\hat{\varepsilon}_i|\le m^2\rho_1=\rho$ and $e_i+\hat{\varepsilon}_i\ge 0$ to make $\mathbf{f}_t^{(2)}(\mathbf{e}^{\mathbf{G}_{t}}+\boldsymbol{\varepsilon})=\mathbf{e}^{\mathbf{G}_{t}}+\hat{\boldsymbol{\varepsilon}}$. As a result, restrict it to $M_{t+1}$ to get $\mathbf{f}_t^{M_{t+1}(2)}(\mathbf{e}^{\mathbf{G}_{t}}+\boldsymbol{\varepsilon})=\mathbf{e}^{\mathbf{G}_{t+1}}+\hat{\boldsymbol{\varepsilon}}^{\mathbf{G}_{t+1}}$, which means,
\begin{equation*}\label{eq5}
    \mathbf{f}_t^{(2)}(\mathbf{e}^{\mathbf{G}_{t}}+ \boldsymbol{\varepsilon}) = (\mathbf{e}^{\mathbf{G}_{t+1}}+\hat{\boldsymbol{\varepsilon}}^{\mathbf{G}_{t+1}},\mathbf{e}^{*I},\mathbf{0}^{I'})
\end{equation*}

Noticing that after two iterates, the players in $M_{t+1}$ can only have positive-effort neighbors from players in $M_{t+1}$, and thus, 
\[
\mathbf{f}_t^{(3)}(\mathbf{e}^{\mathbf{G}_{t}}+ \boldsymbol{\varepsilon}) = (\mathbf{f}_t^{M_{t+1}}(\mathbf{e}^{\mathbf{G}_{t+1}}+\hat{\boldsymbol{\varepsilon}}^{\mathbf{G}_{t+1}}),\mathbf{e}^{*I},\mathbf{0}^{I'})=(\mathbf{f}_{t+1}(\mathbf{e}^{\mathbf{G}_{t+1}}+\hat{\boldsymbol{\varepsilon}}^{\mathbf{G}_{t+1}}),\mathbf{e}^{*I},\mathbf{0}^{I'}).
\]
Repeat it to know that, 
\begin{equation*}\label{eq6}
    \mathbf{f}_t^{(2+r)}(\mathbf{e}^{\mathbf{G}_{t}}+ \boldsymbol{\varepsilon}) = (\mathbf{f}_{t+1}^{(r)}(\mathbf{e}^{\mathbf{G}_{t+1}}+\hat{\boldsymbol{\varepsilon}}^{\mathbf{G}_{t+1}}),\mathbf{e}^{*I},\mathbf{0}^{I'})\to (\mathbf{e}^{\mathbf{G}_{t+1}},\mathbf{e}^{*I},\mathbf{0}^{I'})=\mathbf{e}^{\mathbf{G}_{t}},
\end{equation*}
which means $\mathbf{e}^{\mathbf{G}_{t}}$ is a stable specialized equilibrium of $\Gamma(\mathbf{G}_{t})$.

On the other hand, if $\mathbf{e}^{\mathbf{G}_{t}}$ is stable, there exists $\rho>0$ such that $\mathbf{f}_{t}^{(n)}(\mathbf{e}^{\mathbf{G}_{t}}+ \boldsymbol{\hat{\varepsilon}})\to \mathbf{e}^{\mathbf{G}_{t}}$ if for any $i\in M_t$, $|\hat{\varepsilon}_i|\le \rho$ and $e_i+\hat{\varepsilon}_i\ge 0$. Let $\rho_1=\rho$, then for any $\boldsymbol{\varepsilon}\in \mathbb{R}^{|M_{t+1}|}$ such that $|\varepsilon_i|\le \rho_1$ and $e_i+\varepsilon_i\ge 0$, denote $\hat{\boldsymbol{\varepsilon}}=(\boldsymbol{\varepsilon},\mathbf{0}^{I\cup I'})$ to know that,
\[
\mathbf{f}_t (\mathbf{e}^{\mathbf{G}_{t}}+ \hat{\boldsymbol{\varepsilon}}) = \mathbf{f}_t (\mathbf{e}^{\mathbf{G}_{t+1}}+ \boldsymbol{\varepsilon},\mathbf{e}^{*I},\mathbf{0}^{I'})
=(\mathbf{f}_{t+1} (\mathbf{e}^{\mathbf{G}_{t+1}}+ \boldsymbol{\varepsilon}),\mathbf{e}^{*I},\mathbf{0}^{I'}).
\]
Repeat it to get,
\[
\mathbf{f}_t^{(r)} (\mathbf{e}^{\mathbf{G}_{t}}+ \hat{\boldsymbol{\varepsilon}})
=(\mathbf{f}_{t+1}^{(r)} (\mathbf{e}^{\mathbf{G}_{t+1}}+ \boldsymbol{\varepsilon}),\mathbf{e}^{*I},\mathbf{0}^{I'})\to (\mathbf{e}^{\mathbf{G}_{t+1}},\mathbf{e}^{*I},\mathbf{0}^{I'}),
\]
which means $\mathbf{e}^{\mathbf{G}_{t+1}}$ is a stable specialized equilibrium of $\Gamma(\mathbf{G}_{t+1})$.

\textbf{Case 2: Only nodes $I''$ is eliminated in step 4 in the $(t+1)$th iteration.} Suppose $M_t=M_{t+1}\cup I''$ where $M_{t+1}$ is the residual players in $\Gamma(\mathbf{G}_{t+1})$. Then,
\begin{equation*}\label{eq7}
    \mathbf{f}_t(\mathbf{e}^{\mathbf{G}_{t}}+ \boldsymbol{\varepsilon}) = (\mathbf{f}_t^{M_{t+1}}(\mathbf{e}^{\mathbf{G}_{t}}+ \boldsymbol{\varepsilon}),\mathbf{f}_t^{I''}(\mathbf{e}^{\mathbf{G}_{t}}+ \boldsymbol{\varepsilon}))
\end{equation*}

By the definition of set $I''$, the players in $M_{t+1}$ can only be influenced by other players in $M_{t+1}$, and the players in $I''$ can also only be influenced by players in $M_{t+1}$, which means $\mathbf{f}_t^{M_{t+1}}(\mathbf{e}^{\mathbf{G}_{t}}+ \boldsymbol{\varepsilon})=\mathbf{f}_{t+1}(\mathbf{e}^{\mathbf{G}_{t+1}}+ \boldsymbol{\varepsilon})$ and $\mathbf{f}_t^{I''}(\mathbf{e}^{\mathbf{G}_{t}}+ \boldsymbol{\varepsilon})=\boldsymbol{g}(\mathbf{e}^{\mathbf{G}_{t+1}}+ \boldsymbol{\varepsilon})$ for some $\boldsymbol{g}: \mathbb{R}^{|M_{t+1}|}\to \mathbb{R}^{|I''|}$. As a result, 
\begin{equation}\label{eq4}
    \mathbf{f}_t^{(r)}(\mathbf{e}^{\mathbf{G}_{t}}+ \boldsymbol{\varepsilon}) = (\mathbf{f}_{t+1}^{(r)}(\mathbf{e}^{\mathbf{G}_{t+1}}+ \boldsymbol{\varepsilon}),\boldsymbol{g}(\mathbf{f}_{t+1}^{(r-1)}(\mathbf{e}^{\mathbf{G}_{t+1}}+ \boldsymbol{\varepsilon}))).
\end{equation}

On the one hand, if $\mathbf{e}^{\mathbf{G}_{t+1}}$ is a stable equilibrium of $\Gamma(\mathbf{G}_{t+1})$, for any $\eta>0$, define $\eta_1=\eta/m>0$, $\exists R_1$ s.t. $\forall r>R_1$, $|\mathbf{f}_{t+1}^{(r-1)}(\mathbf{e}^{\mathbf{G}_{t+1}}+ \boldsymbol{\varepsilon})-\mathbf{e}^{\mathbf{G}_{t+1}}|<\eta_1=\eta/m$ where a vector $\boldsymbol{v}$ and a constant $a$ satisfies $|\boldsymbol{v}|<a$ means that all items $v_i$ in vector $\boldsymbol{v}$ satisfies $|v_i|<a$. Then, $\forall r>R_1$, $|\boldsymbol{g}(\mathbf{f}_{t+1}^{(r-1)}(\mathbf{e}^{\mathbf{G}_{t+1}}+ \boldsymbol{\varepsilon}))-\mathbf{e}^{\mathbf{G}_{t},I}|<\eta$ where $\mathbf{e}^{\mathbf{G}_{t},I}$ is the effort profile of players in set $I$ according to $\mathbf{e}^{\mathbf{G}_{t}}$. Combining those to get that for any $\eta>0$, $\exists R_1$ s.t. $\forall r>R_1$, $|\mathbf{f}_t^{(r)}(\mathbf{e}^{\mathbf{G}_{t}}+ \boldsymbol{\varepsilon})-\mathbf{e}^{\mathbf{G}_{t}}|<\eta$, which means $\mathbf{e}^{\mathbf{G}_{t}}$ is a stable specialized equilibrium of $\Gamma(\mathbf{G}_{t})$.

One the other hand, if $\mathbf{e}^{\mathbf{G}_{t}}$ is a stable specialized equilibrium of $\Gamma(\mathbf{G}_{t})$, then for any $\eta>0$ there exists $R_1$ such that $\forall r>R_1$, $|\mathbf{f}_{t}^{(r)}(\mathbf{e}^{\mathbf{G}_{t}}+ \boldsymbol{\varepsilon})-\mathbf{e}^{\mathbf{G}_{t}}|<\eta$, it follows that $|\mathbf{f}_{t+1}^{(r)}(\mathbf{e}^{\mathbf{G}_{t+1}}+ \boldsymbol{\varepsilon})-\mathbf{e}^{\mathbf{G}_{t+1}}|<\eta$ according to \eqref{eq4}, which proves that $\mathbf{e}^{\mathbf{G}_{t+1}}$ is a stable equilibrium of $\Gamma(\mathbf{G}_{t+1})$.
\end{proof}

\end{document}